\newcommand{\donotshow}[1]{}
\newcommand{\ignore}[1]{}
\newtheorem{theorem}{Theorem}
\newtheorem{lemma}[theorem]{Lemma}
\newtheorem{definition}{Definition}
\providecommand{\qed}{\rule[-0.2ex]{0.3em}{1.4ex}}
\newcommand{\assign}{\mathbin{:=}}
\providecommand{\R}{\mathbb{R}}
\newcommand{\N}{\mathbb{N}}
\newcommand{\Z}{\mathbb{Z}}
\newcommand{\C}{\mathbb{C}}
\newlength{\setspacing}
\providecommand{\sset}[1]{\{\, #1 \,\}}
\newcommand{\mbegin}{\{\ \ }
\newcommand{\mend}{\}}
\newlength{\mleftindent}
\newlength{\mindent}
\newlength{\mboxwidth}
\newcommand{\mincrement}{\addtolength{\mboxwidth}{-\mindent}}
\newcommand{\mdecrement}{\addtolength{\mboxwidth}{\mindent}}
\newlength{\preprogramskip}
\newlength{\postprogramskip}
\newlength{\mexpwidth}
\newlength{\mexpindent}
\newcommand{\indentafterkeyword}{\hspace*{0.5em}}
\newcommand{\mslifelse}[3]  %if is short else is long
{\setlength{\mexpwidth}{\mboxwidth}%
\settowidth{\mexpindent}{{\bf if\indentafterkeyword}}%
\addtolength{\mexpwidth}{-\mexpindent}%
{\bf if\indentafterkeyword}\parbox[t]{\mexpwidth}{#1}\\
\mincrement \mbegin \parbox[t]{\mboxwidth}{#2 \mend} \mdecrement \\
{\bf else} \\
\mincrement \mbegin \parbox[t]{\mboxwidth}{#3}\\
\mend \mdecrement
}
\newlength{\proofpostskipamount}\newlength{\proofpreskipamount}
\par\vspace{0.5ex}\noindent{\bf Proof #1:}\hspace{0.5em}}%
\newlength{\mydefwidth}
\newlength{\mytextwidth}
\newcommand{\myurl}[1]{{\footnotesize \url{#1}}}
\newcommand{\var}{\operatorname{var}}
\newcommand{\Otilde}{\tilde{O}}
\newcommand{\dcn}{\textsc{Dsc}^2}
\numberwithin{equation}{section}
\numberwithin{figure}{section}
\begin{document}

\begin{frontmatter}

\title{When Newton meets Descartes: A Simple and Fast Algorithm to Isolate the Real Roots of a Polynomial\\
}

%\thanks{This research was partly supported by .....}

\author{Michael Sagraloff}
\address{MPI for Informatics, Saarbr\"ucken, Germany}
\ead{msagralo@mpi-inf.mpg.de}
%\ead[url]{URL 2}

\begin{abstract}
We introduce a new algorithm denoted $\dcn$ to isolate the real roots of a univariate square-free polynomial $f$ with integer coefficients. The algorithm iteratively subdivides an initial interval which is known to contain all real roots of $f$. The main novelty of our approach is that we combine Descartes' Rule of Signs and Newton 
iteration. More precisely, instead of using a fixed subdivision strategy such as bisection in 
each iteration, a Newton step based on the number of sign variations 
for an actual interval is considered, and, only if the Newton step fails, we fall back to bisection. Following this approach, our analysis shows that, for most iterations, we can achieve quadratic convergence towards the real roots. In terms of complexity, our method induces a recursion tree of almost optimal size $O(n\cdot\log(n\tau))$, where $n$ denotes the degree of the polynomial and $\tau$ the bitsize of its coefficients. The latter bound constitutes an improvement by a factor of $\tau$ upon all existing subdivision methods for the task of isolating the real roots. In 
addition, we provide a bit complexity analysis showing that $\dcn$ needs only $\tilde{O}(n^3\tau)$ bit operations\footnote{$\tilde{O}$ indicates that we omit logarithmic factors.} to
isolate all real roots of $f$. This matches the best bound known for this fundamental problem. However, in 
comparison to the much more involved algorithms by Pan and Sch\"onhage (for the task of isolating all complex roots) which achieve the same bit 
complexity, $\dcn$ focuses on real root isolation, is very easy to access and easy to implement.
\end{abstract}

\begin{keyword}
Root isolation, Subdivision Methods, Exact Computation, Newton's Method, Descartes' Rule of Signs, Bit Complexity, Asymptotically Fast Methods
\end{keyword}

\end{frontmatter}

\section{Introduction}\label{intro}

Finding the roots of a univariate polynomial $f$ is considered as one of the most important tasks in computational algebra. This is justified by the fact that many problems from mathematics, engineering, computer science, and the natural sciences can be reduced to solving a system of polynomial equations which in turn, by means of elimination techniques such as resultants or Gr\"obner Bases, reduces to solving a polynomial equation in one variable. Hence, it is not surprising that numerous approaches are dedicated to this fundamental problem.
We mainly distinguish between (1) numerical and exact methods, and (2) methods to find all complex roots and methods which are especially tuned to search for real roots. The numerical literature lists many algorithms, such as  Newton iteration or the Weierstrass-Durand-Kerner method, that are widely used and effective in practice but lack a guarantee on the global behavior (cf. \cite{pan:history-progress:97} for discussion). In particular, the convergence and/or the complexity of the Weierstrass-Durand-Kerner method is still open.

The work of A.\,Sch\"onhage \cite{schonhage:fundamental} from 1982 marks the beginning of the complexity-theoretic approaches. It combines a newly introduced concept denoted \emph{splitting circle method} with techniques from numerical analysis (Newton iteration, Graeffe's method, discrete Fourier transforms) and fast algorithms for polynomial and integer multiplication. For the benchmark problem of isolating all complex roots of a polynomial $f$ of degree $n$ with integer coefficients of modulus $2^\tau$ or less, the proposed method achieves the record bound of $\tilde{O}(n^3\tau)$ bit operations. V.\,Pan and others~\cite{pan:seq-parallel:87,pan:history-progress:97} gave theoretical improvements in the sense of achieving record bounds simultaneously
in both bit complexity and arithmetic complexity, but the initial bound $\tilde{O}(n^3\tau)$ on the number of bit operations has still remained intact. Common to all asymptotically fast algorithms is (as the authors themselves admit) that they are rather involved and very difficult to implement. The latter is also due to the fact that, in order to control the precision errors in the considered numerical subroutines, one has to carefully work out many details of their implementation. Hence, it is not surprising that, despite their theoretical richness, the asymptotically fast algorithms have so far not been used, or not proven to be efficient in practice; see~\cite{Gourdon96} for an implementation of the splitting circle method within the Computer Algebra system Pari/GP. A further reason might be that the benchmark problem is inappropriate for most applications. For instance, in ray shooting in computer graphics, we are only interested in the first positive root or in the real roots in some specified neighborhood. 

In parallel to the development of purely numerical methods, there is a steady ongoing research on exact subdivision algorithms, such as the Descartes method~\cite{collins-akritas:76,eigenwillig:thesis,krandick-mehlhorn:06,mrr:bernstein:05,rouillier-zimmermann:roots:04}, the Bolzano method~\cite{DBLP:journals/corr/abs-1102-5266,DBLP:journals/eccc/BurrKY09,burr-sharma-yap:eval:09,mitchell:robust-ray:90,DBLP:conf/issac/YapS11}, Sturm Sequences~\cite{du-sharma-yap:sturm:07,lickteig-roy:sequences:01,reischert:subresultant:97} or the continued fraction method~\cite{akritas-strzebonski:comparison:05,sharma:continued-frac:08,te-cf:08}. These methods from the exact computation literature are widely used in various algebraic applications (e.g., cylindrical algebraic decomposition), and many of them have been integrated into computer algebra systems (e.g., \textsc{Maple}, \textsc{Mathematica}, \textsc{Sage}, etc.). In addition, their computational complexity has been well-studied~\cite{du-sharma-yap:sturm:07,eigenwillig-sharma-yap:descartes:06,sharma:continued-frac:08,te-cf:08,DBLP:conf/issac/YapS11}, and many experiments have shown their practical evidence~\cite{snc-benchmarks09,johnson:root-isolation:98,rouillier-zimmermann:roots:04}. Current experimental data shows that a version of the Descartes method (i.e., the univariate solver in \textsc{Rs} based on~\cite{rouillier-zimmermann:roots:04}, integrated into \textsc{Maple}) which uses approximate computation performs best for most polynomials, whereas, for harder instances, the continued fraction approach seems to be more efficient. 
With respect to the benchmark problem, all of the above mentioned algorithms demand for $\tilde{O}(n^4\tau^2)$ bit operations to isolate all real roots, hence they tend to lag behind the asymptotically fast algorithms by a factor of $n\tau$. Recently, it has been shown~\cite{sagraloff-complexity} that the bound on the bit complexity for the Descartes method can be lowered to $\tilde{O}(n^3\tau^2)$ when replacing exact computation by approximate computation (without abstaining from correctness). This result partially explains the success of such a modified Descartes method in practice.
However, as long as we restrict to the bisection strategy, it seems that the latter bound is optimal. We remark that Sch\"onhage already made a similar observation: In the introduction of~\cite{schonhage:fundamental}, he argued that "a factor $\tau^2$ inevitably occurs if nothing better than linear convergence is achieved".\\

In this paper, we introduce an exact and complete subdivision algorithm denoted $\dcn$ to isolate all real roots of a square-free polynomial with integer coefficients. 
Similar to the classical Descartes method, we use Descartes' Rule of Signs to determine an upper bound $v_I=\var(f,I)$ for the number of real roots of the polynomial $f$ within an interval $I$ that is actually processed. However, instead of splitting $I$ into two equally sized subintervals in each iteration, we consider a subdivision strategy that is based on Newton iteration: The analysis of the classical approach which exclusively uses bisection shows that the induced recursion tree is large ($\approx n\tau$) if and only if there exists a long sequences $I_1\supset I_2\supset\cdots\supset I_s$ of intervals in the subdivision process, where $v=v_{I_1}=\cdots=v_{I_s}$. Such a sequence implies the existence of a cluster $\mathcal{C}$ of $v$ nearby roots, and vice versa (cf. Theorem~\ref{Obreshkoff}). Hence, it seems reasonable to obtain a good approximation (i.e., an interval $I'\subset I$ close to $\mathcal{C}$) of such a cluster by considering a corresponding Newton step to approximate a $v$-fold root. Combining Descartes' Rule of Signs and a subdivision technique similar to the one as proposed by J.\,Abbott for the QIR method~\cite{abbott-quadratic}, we formulate a method to determine whether the so-obtained approximation $I'$ should be kept or not. In case of success, we proceed with the considerably smaller interval $I'$, whereas we fall back to bisection if the Newton step fails. 
Our analysis shows that, following this approach, we achieve quadratic convergence in most iterations. As a consequence, the induced recursion tree has almost optimal size $O(n\log(n\tau))$ which improves upon the bisection strategy by a factor of $\tau$. We further provide a detailed bit complexity analysis which yields the bound $\tilde{O}(n^3\tau)$ for $\dcn$. This matches the record bound achieved by the aforementioned asymptotically fast algorithms.

We consider our contribution of great importance because of the following reasons: (1) Although the proposed method is rather simple, it achieves the best bounds known for the bit complexity of the problem of isolating the real roots of a polynomial. (2) In addition, it is much easier to access and also much easier to implement than the asymptotically fast algorithms that are available so far. In comparison to the existing practical methods for real root isolation, the modifications are moderate, and thus we expect that a careful implementation of our new approach will outperform the existing ones. (3) Finally, our method can be applied to search for the real roots in some specified neighborhood of interest, a property which is not fulfilled by the algorithms as proposed by Pan and Sch\"onhage.

\section{Overall Idea}\label{sec:idea}

In this section, we first provide a high-level description of our new algorithm, and then outline the argument why this approach improves upon existing methods such as the classical Descartes method. For the exact definition of the algorithm and a detailed bit complexity analysis, we refer the interested reader to Section~\ref{sec:algorithm}.
Throughout the following considerations, let 
\begin{align}
f(x):=\sum_{i=0}^n a_i x^i\in\Z[x]\text{, with }|a_i|<2^\tau,\label{polyf}
\end{align}
be a square-free polynomial of degree $n$ with integer coefficients of bit-length $\tau$ or less. We further denote $z_1,\ldots,z_n$ the complex roots of $f$, $\sigma(z_i):=\min_{j\neq i}|z_i-z_j|$ the \emph{separation of $z_i$}, and $\sigma_f:=\min_i \sigma(z_i)$ the \emph{separation of $f$}. According to Cauchy's bound, the modulus of each root $z_i$ is bounded by $1+2^{\tau}\le 2^{\tau+1}$, and thus, for the task of isolating the real roots of $f$, we can restrict our search to the interval $\mathcal{I}_0:=(-2^{\tau+1},2^{\tau+1})$.

\begin{figure}[t]
\begin{center}
\includegraphics[width=11cm]{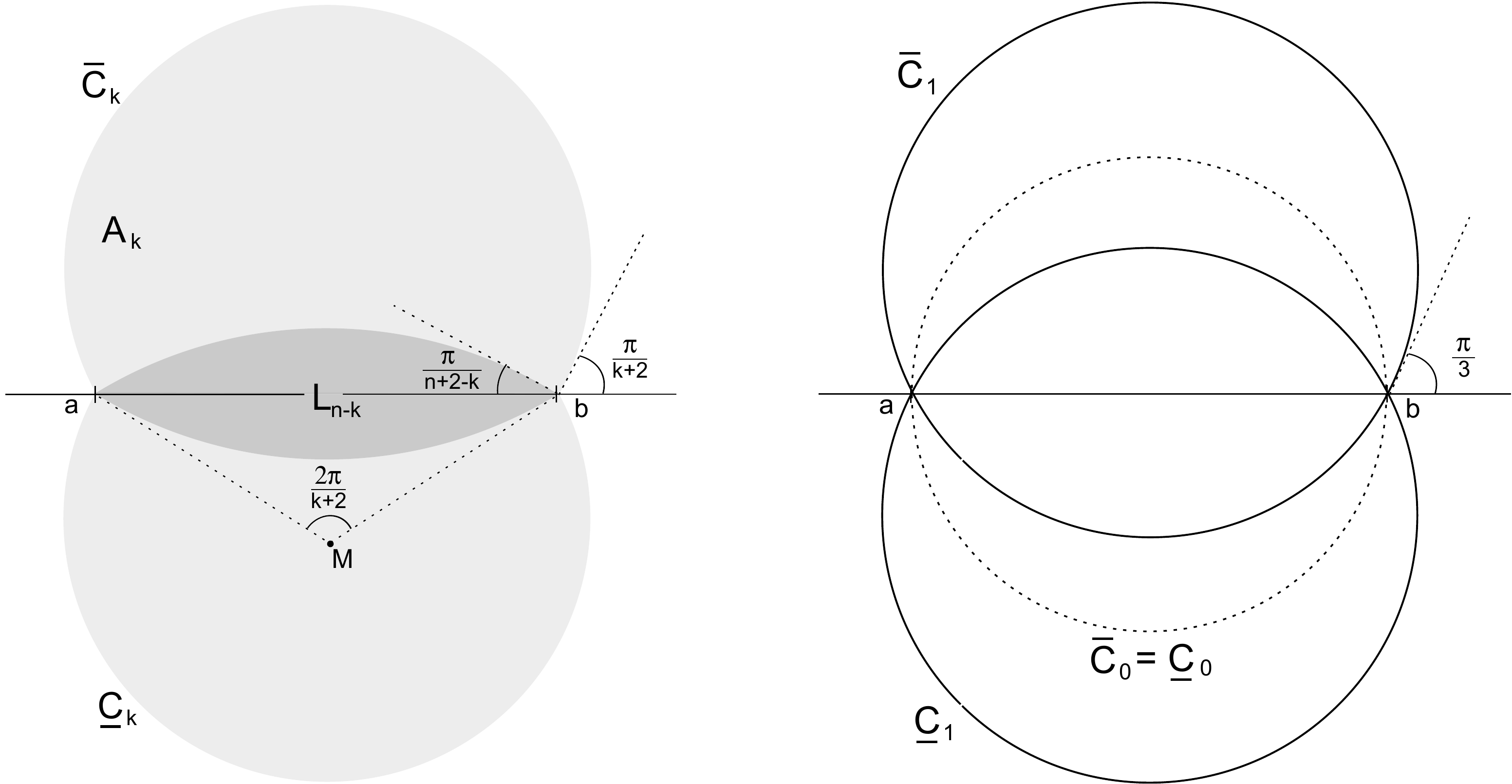}\end{center}
\caption{\label{fig:Obreshkoff} For any $k$ with $0 \le k \le n$, the \emph{Obreshkoff discs} 
$\overline{C}_k$ and $\underline{C}_k$ for $I=(a,b)$ have the endpoints of $I$ on their
boundaries; their centers see the line segment $(a,b)$ under the angle $2\pi/(k+2)$.  
The \emph{Obreshkoff lens} $L_k$ is the interior of $\overline{C}_k \cap
\underline{C}_k$, and the \emph{Obreshkoff area} $A_k$ is the interior of $\overline{C}_k \cup
\underline{C}_k$. Any point (except for $a$ and $b$) on the boundary of $A_k$ sees $I$ under an
angle $\pi/(k+2)$, and any point (except for
$a$ and $b$) on the boundary of
$L_k$ sees $I$ under the angle $\pi - \pi/(k+2)$. We have $L_n \subset \ldots \subset L_1 \subset L_0$ and $A_0
\subset A_1 \subset \ldots \subset A_n$. The cases $k=0$ and $k=1$ are of special interest: The circles
$\overline{C}_0$ and $\underline{C}_0$ coincide. They have their centers at the
midpoint of $I$. The circles $\overline{C}_1$
and $\underline{C}_1$ are the circumcircles of the two equilateral triangles
having $I$ as one of their edges. We call $A_0$ and $A_1$ the \emph{one} and \emph{two-circle regions} for $I$.}
\end{figure}

We consider an arbitrary root isolation method denoted $\textsc{Iso}$ which recursively performs subdivision on $I_0$ in order to determine isolating intervals for the real roots of $f$. \textsc{Iso} uses a counting function $\var(f,I)$ for $m$, the number of roots within an interval $I\subset \mathcal{I}_0$, where $v:=\var(f,I)\in\N$ fulfills the following properties:\vspace{0.25cm}
\begin{itemize}
\item[] (P1)\quad $v$ is an upper bound for $m$ (i.e., $v\ge m$), and
\item[] (P2)\quad $v$ has the same parity as $m$ (i.e., $v\equiv m \text{ }\operatorname{mod}\text{ }2$).\vspace{0.25cm}
\end{itemize}
The latter two properties imply that $\var(f,I)=m$ if $v\le 1$. Hence, in each step of the recursion, an interval $I$ is stored as isolating if $v=1$, and $I$ is discarded if $v=0$. If $v>1$, $I=(a,b)$ is subdivided (according to some subdivision strategy) into subintervals $I_1=(a,\lambda_1)$, $I_2=(\lambda_1,\lambda_2),\ldots,I_l:=(\lambda_{l-1},b)$, where $1\le l\le l_0$, $l_0\in\N$ is a global constant, and the $\lambda_i$ are rational values.
In order to detect roots at the subdivision points $\lambda_i$, we also check whether $f(\lambda_i)=0$ or not. $T_\textsc{Iso}$ denotes the recursion tree induced by $\textsc{Iso}$. Throughout the following considerations, we often treat nodes of $\textsc{Iso}$ and intervals produced by $\textsc{Iso}$ as interchangeable.

The definition of the counting function $\var(f,I)$ is based on Descartes' Rule of Signs: \textit{For an arbitrary polynomial $p=\sum_{i=0}^n p_i x^i\in\R[x]$, the number $m$ of positive real roots of $p$ is bounded by the number $v$ of sign variations in its coefficient sequence $(p_0,\ldots,p_n)$ and, in addition, $v\equiv m \text{ }\operatorname{mod}\text{ }2$}. In order to extend the latter rule to arbitrary intervals $I=(a,b)$, the M\"obius transformation $x\mapsto \frac{ax+b}{x+1}$ which maps $(0,+\infty)$ one-to-one onto $I$ is considered. Thus, for
\begin{align}
f_I(x)=\sum_{i=0}^n c_i x^i:=(x+1)^n\cdot f\left(\frac{ax+b}{x+1}\right),\label{poly:fI}
\end{align}
and $\var(f,I)$ defined as the number of sign variations in the coefficient sequence $(c_0,\ldots,c_n)$ of 
$f_I$, $\var(f,I)$ fulfills the properties (P1) and 
(P2). Because of the latter two properties and the fact that we never discard intervals that contain a real root of $f$, correctness of $\textsc{Iso}$ follows immediately.

The \emph{classical Descartes method} (\textsc{Dsc} for short) is a subdivision method which uses bisection in each iteration, that is, in each step, we have $l=l_0=2$, $I_1=(a,\lambda_1):=(a,m(I))$ and $I_2=(\lambda_1,b)=(m(I),b)$, with $m(I):=(a+b)/2$ the midpoint of $I$. Termination and complexity
analysis of \textsc{Dsc} rest on the following theorem:

\begin{thm}[\cite{Obrechkoff:book-english,Obreschkoff:book}]
\label{Obreshkoff} Let $I=(a,b)$ be an open interval and
$v= \var(f,I)$. If the Obreshkoff lens 
$L_{n - k}\subset\C$ (see Figure~\ref{fig:Obreshkoff} for the definition of $L_{n - k}$) contains at least $k$ roots (counted with
multiplicity) of $f$, then $v \ge k$. If the Obreshkoff area $A_k\subset\C$ contains at most $k$ roots (counted with multiplicity) of $k$, then $v \le k$. In
particular,\vspace{0.25cm}
\begin{itemize}
\item[](\textnormal{P3)}\quad\quad\quad\# of roots of $f$ in $L_n\le    \var(p,I) \le \#$ of roots
of $f$ in $A_n$.
\end{itemize}
\end{thm}

We remark that the special cases $k=0$ and $k=1$ appear as the one- and two-circle theorems in the literature~\cite{Alesina-Galuzzi,eigenwillig:thesis,krandick-mehlhorn:06,Obreshkoff25,Ostrowski:1950}. For the Descartes method, Theorem~\ref{Obreshkoff} implies that no interval $I$ of length $w(I)\le\sigma_f$ is split. Namely, its one-circle region $A_0$ cannot contain two or more roots. If $A_0$ contains no root, then $\var(f,I)=0$. Otherwise, $A_0$ contains one real root, and thus the two-circle region $A_1$ contains no non-real root. Hence, in the latter situation, we have $\var(f,I)=1$ by Theorem~\ref{Obreshkoff}. We conclude that the depth of the recursion tree $T_{\textsc{Dsc}}$ induced by the Descartes method is bounded by $\log w(\mathcal{I}_0)+\log\sigma_f^{-1}=\tau+\log\sigma_f^{-1}+2$. Furthermore, it holds (see~\cite[Corollary 2.27]{eigenwillig:thesis} for a self-contained proof):

\begin{thm}\label{subad}
Let $I$ be an interval and $I_1$ and $I_2$ be two disjoint subintervals of $I$. Then,\vspace{0.25cm}
\begin{itemize}
\item[]\textnormal{(P4)}\quad\quad\quad\quad\quad\quad\quad\quad\quad$\var(f,I_1) + var(f,I_2) \le \var(f,I).$
\end{itemize}
\end{thm}

\begin{figure}[t]
\begin{center}
\includegraphics[width=13cm]{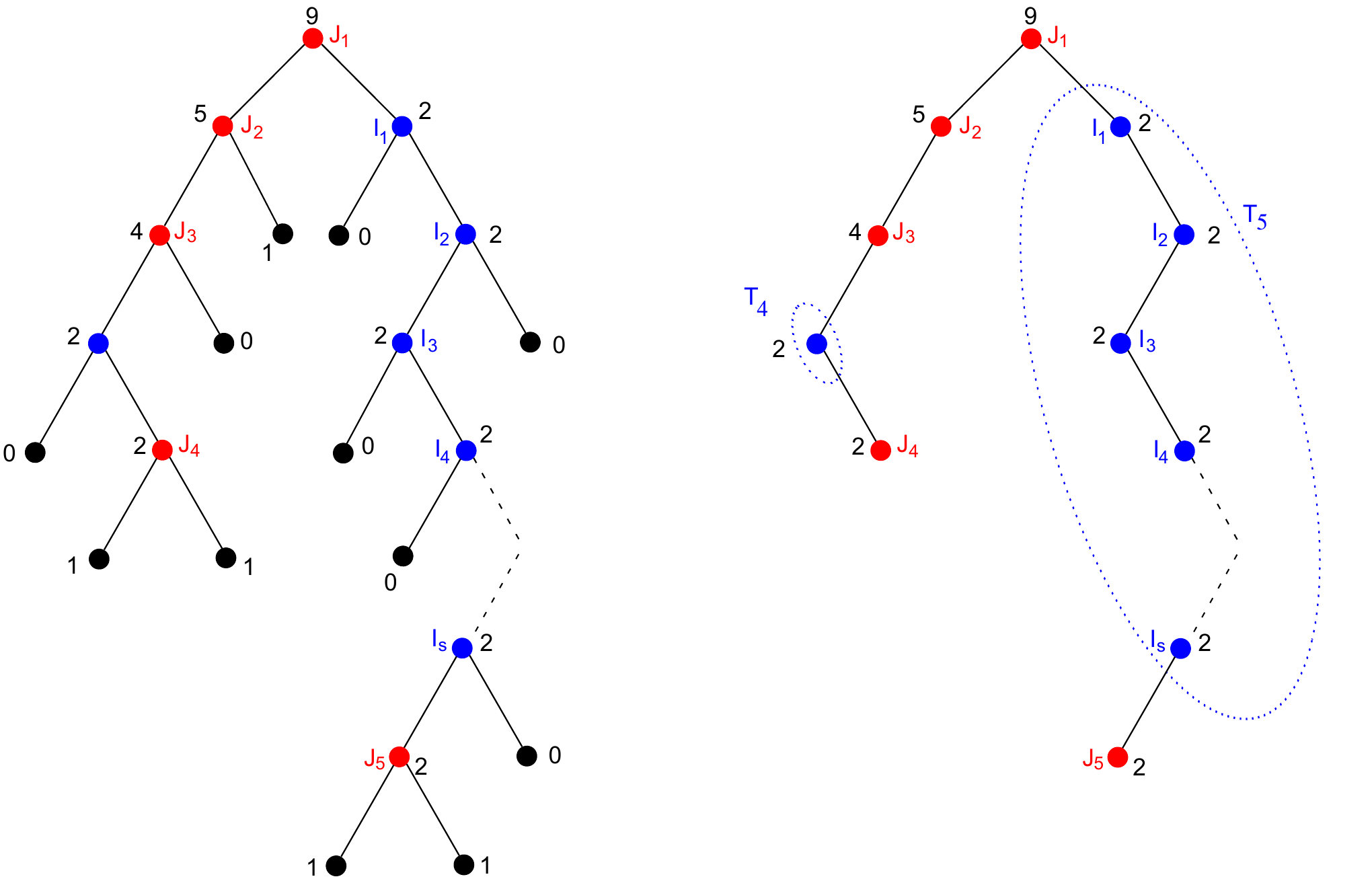}\end{center}
\caption{\label{tree} The left figure shows the subdivision tree $T_{\textsc{Dsc}}$ induced by the Descartes method, where, for each node $I$, the number $\var(f,I)$ of sign variations is given (e.g., $\var(f,J_1)=9$, or $\var(f,J_4)=2$). The colors \emph{red, black} and \emph{blue} indicate the (1) special nodes, (2) terminal nodes, and (3) all other nodes (non-special and non-terminal), respectively. The right figure shows the subtree $T_{\textsc{Dsc}}^*$ obtained by removing all terminal nodes. The non-special nodes in $T_{\textsc{Dsc}}^*$ partition into maximal connected components $T_4$ and $T_5$.\ignore{The long sequence $I_1\supset I_2\supset\cdots\supset I_s$ of intervals in $T_5$ with $\var(f,I_1)=\cdots=\var(f,I_s)=2$ corresponds to a large number of bisection steps to isolate two very nearby roots from each other; see the figure on the right with the graph of $f$ above $I_1$.}}
\end{figure}

According to Theorem~\ref{subad}, there cannot be more than $n/2$ intervals $I$ with $\var(f,I) \ge 2$ at any level of the recursion. Hence, the size of $T_{\textsc{Dsc}}$ is bounded by $n(\tau+\log\sigma_f^{-1}+2)$. Using Davenport-Mahler bound, one can further show~\cite{eigenwillig:thesis,sagraloff-complexity} that $\log\sigma_f^{-1}=O(n(\log n+\tau))$, and thus the bound for $|T_{\textsc{Dsc}}|$ writes as $\Otilde(n^2\tau)$. A more refined argument~\cite{eigenwillig:thesis} yields $|T_{\textsc{Dsc}}|=\Otilde(n\tau)$ which is optimal for the bisection strategy.\\

In the next step, we study the situation where the recursion tree $T_{\textsc{Dsc}}$ for the classical Descartes method is large. We then introduce our new algorithm which we denote $\textsc{Dsc}^2$ due to the quadratic convergence in most steps. $\textsc{Dsc}^2$ is a variant of the Descartes method which adaptively addresses the latter situation via combining Newton iteration and bisection. We also sketch the argument why this approach improves upon \textsc{Dsc}. The following definition is essential for the argument; see also Figure~\ref{tree}:

\begin{definition}
Let $I$ be a node (interval) in the recursion tree $T_{\textsc{Iso}}$ induced by some subdivision algorithm \textsc{Iso}. We call $I$ \emph{terminal} if $\var(f,I)\le 1$. A non-terminal interval $I$ with children $I_1,\ldots,I_l$ is called \emph{special} if $I=\mathcal{I}_0$ (i.e., $I$ is the root of $T_{\textsc{Iso}}$), or
\[
\var(f,I_j)<\var(f,I)\text{ for all }j=1,\ldots,s.
\]
\end{definition}

According to (P4) in Theorem~\ref{subad}, we have $\sum_{j=1}^l\var(f,I_j)\le\var(f,I)$ for each $I$. Thus, a non-terminal node $I$ different from $\mathcal{I}_0$ is non-special if and only if, for one of its children, we count the same number of sign variations as for $I$ and, for all other children, we count no sign variation. In total, there exist $n'$ special nodes, where $n'\le \var(f,\mathcal{I}_0)\le n$. Namely, when we subdivide a special interval which is not the root of the recursion tree, the non-negative value $\mu:=\sum_I\var(f,I)-\#\{I:\var(f,I)>0\}$ decreases by at least one, where we sum over all leafs in the actual iteration, and $\mu$ is initially set to $\mu=\var(f,\mathcal{I}_0)-1\le n-1$. We denote the special nodes by $J_1,\ldots,J_{n'}$ and assume, w.l.o.g., that $w(J_i)\ge w(J_k)$ if $i<k$. In particular, $J_1=\mathcal{I}_0$. We define $T_{\textsc{Iso}}^*$ the subtree of $T_{\textsc{Iso}}$ obtained from $T_{\textsc{Iso}}$ via removing all terminal nodes. Then, $T_{\textsc{Iso}}^*$ partitions into\vspace{0.25cm}
\begin{itemize}
\item[](1)\quad the special nodes $J_1,\ldots,J_{n'}$ (red dots in Figure~\ref{tree}), and
\item[](2)\quad subtrees $T_i\subset T_{\textsc{Iso}}^*$, with $i=2,\ldots,n'$, consisting of all non-special nodes $I\in  T_{\textsc{Iso}}^*$ with $J_i\subset I$ and $J_k \not\subset I$ for all special nodes $J_k$ with $J_k\supsetneq J_i$ (blue dots).\vspace{0.25cm}
\end{itemize}
From our definition of a special node, it follows that each $T_i$ constitutes a chain of intervals $I_1\supset\cdots\supset I_s$ that connects two special nodes. More precisely, $T_i$ connects $J_i$ with $J_k$, where $J_k$ is the special node of minimal width that contains $J_i$.  
Since each interval has at most $l_0$ children, $|T_{\textsc{Iso}}|$ is bounded by $(l_0+1)\cdot|T_{\textsc{Iso}}^*|$. Hence, we have
\begin{align}
O(|T_{\textsc{Iso}}|)=O(|T_{\textsc{Iso}}^*|)=O(n'+\sum_{i=2}^{n'} |T_i|)=O(n)+O(\sum_{i=2}^{n'} |T_i|).\label{treesize1}
\end{align}

\begin{figure}[t]
\begin{center}
\includegraphics[width=11cm]{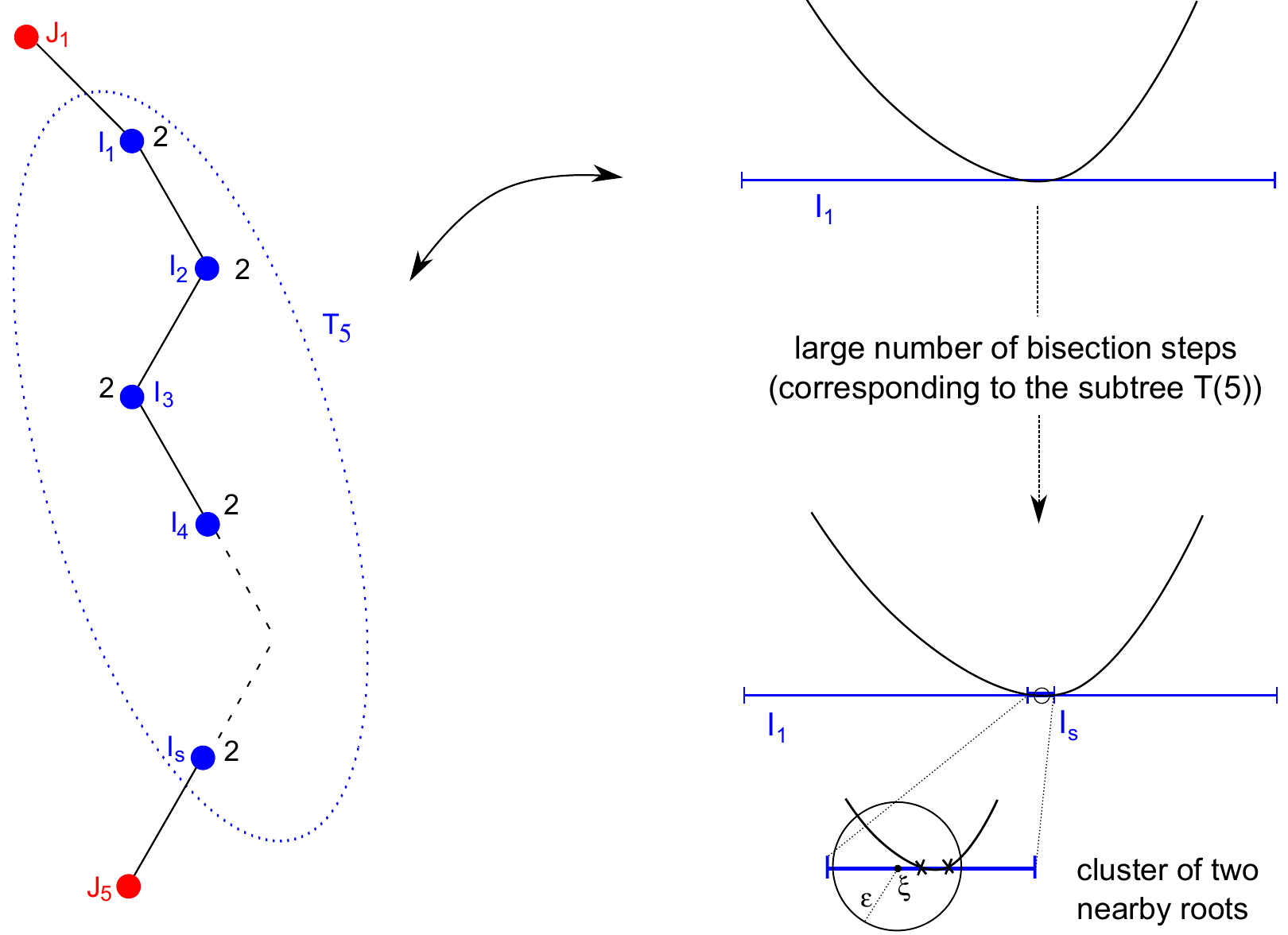}\end{center}
\caption{\label{fig:tree2} The long chain $I_1\supset I_2\supset\cdots\supset I_s$ of intervals in $T(5)$ with $\var(f,I_1)=\cdots=\var(f,I_s)=2$ corresponds to a large number of bisection steps to isolate two very nearby roots from each other; see the figure on the right with the graph of $f$ over $I_1$.}\end{figure}

The latter consideration shows that the size of the subdivision tree mainly depends on the length of the chains $T_i$. For the Descartes method, it might happen that some of these chains are very large (i.e., $|T_i|\approx n\tau$) which is due to the following situation (see also Figure~\ref{fig:tree2}): For a polynomial $f$ as in (\ref{polyf}), it is possible that there exists a $\xi\in\R$ and a very small, complex neighborhood (of size $\epsilon\approx 2^{-n\tau}$) of $\xi$ that contains a cluster $\mathcal{C}$ of $v$ nearby roots of $f$. Thus, separating these roots from each other via bisection requires at least $\log\epsilon^{-1}\approx n\tau$ steps. Furthermore, due to (P3) in Theorem~\ref{Obreshkoff}, there exists a long sequence $I_1\supset I_2\supset\cdots\supset I_s$ of non-special intervals with $\xi\in I_j$ for all $j$, and thus the number 
$$v:=\var(f,I_1)=\var(f,I_2)=\cdots=\var(f,I_s)$$ 
of sign variations does not change for the intervals in this sequence. Namely, for each $I_j$ in the above sequence, the Obreshkoff lens $L_n$ contains $\mathcal{C}$. Vice versa, according to Theorem~\ref{Obreshkoff}, such a long sequence of non-special intervals implies the existence of a cluster $\mathcal{C}$ consisting of $v$ nearby roots as above because the Obreshkoff area $A_n$ of each $I_j$ must contain at least $v$ roots.\footnote{The thoughtful reader may notice that the latter two statements are not completely rigorous: In particular, in the special case where $\xi$, and thus also the cluster $\mathcal{C}$, is very close to one of the endpoints of some $I_j$, it might happen that some of the roots are not considered by the counting function $\var(f,I_j)$ since they are located outside the Obreshkoff lens/area. We will address this issue in our algorithm as defined in Section~\ref{sec:algorithm}.} 
Since a cluster $\mathcal{C}$ of $v$ nearby roots at $\xi$ behaves very similar to a $v$-fold root at $\xi$, it seems reasonable to obtain a good approximation of $\mathcal{C}$ by considering Newton iteration instead of bisection. Namely, for a polynomial $p(x)\in\R[x]$ with a $v$-fold root at $\xi$ and a starting value $x_0$ sufficiently close to $\xi$, it is well-known from numerical analysis that the sequence $(x_i)_{i\in\N_0}$ recursively defined by
\begin{align*}
x_{i+1}:=x_i-v\cdot\frac{p(x_i)}{p'(x_i)}
\end{align*}
converges quadratically to $\xi$. 
Unfortunately, when isolating the roots of $f$, the situation differs considerably from the latter one: First, the above result only holds for a $v$-fold root $\xi$ and does not directly extend to a cluster $\mathcal{C}$ of $v$ roots near $\xi$. Second, in an early stage of the subdivision process, the existence of such a cluster $\mathcal{C}$ is not guaranteed, and even if one exists, we do not know what "sufficiently close to $\xi$" means in this situation.
 
In order to address the above mentioned problems and to finally turn the purely numerical Newton method into an exact and complete algorithm, we propose the following approach: Let $v=\var(f,I)$ be the number of sign variation for an actual interval $I=(a,b)$ in a certain iteration. Then, we consider this as an indicator that there might exist a cluster of $v$ nearby roots. Thus, we compute $\lambda:=t-v\cdot f(t)/f'(t)$ for some $t\in [a,b]$ (e.g., an endpoint of $I$) and consider an interval $I'=(a',b')\subset I$ of width $w(I')\ll w(I)$ that contains $\lambda$. If $\var(f,I')=v$ as well, we keep $I'$ and discard the intervals $(a,a']$ and $[b',b)$. Otherwise, we split $I$ into two equally sized intervals $I_1:=(a,m(I))$ and $I_2:=(m(I),b)$ and finally check whether $f(m(I))=0$ or not. Following this approach, no root is lost and intervals are at least bisected in each iteration. Furthermore, if a cluster $\mathcal{C}$ of nearby roots actually exists, we can hope to achieve fast convergence to this cluster when choosing $I'$ in an appropriate manner. In our algorithm, we choose $I'$ in a similar way as proposed by Abbott~\cite{abbott-quadratic} for the task of further refining intervals which are already isolating for an ordinary root. Namely, we decompose $I$ into a certain number $N_I$ of subintervals and pick the subinterval $I'$ of size $w(I)/N_I$ which contains $\lambda$. If $\var(f,I')=v$, then we keep $I'$ and decompose $I'$ into $N_{I'}=N_I^2$ subintervals in the next iteration. Otherwise, we continue with the intervals $I_1=(a,m(I))$ and $I_2=(m(I),b)$ which are now decomposed into only $N_{I_1}=N_{I_2}:=\max(4,\sqrt{N_I})$ many subintervals, etc.

In the next section, we give the exact definition of our new algorithm, and we show that it induces a subdivision tree of considerably smaller size than $T_{\textsc{Dsc}}$. In particular, it turns out that the size of each $T_i\subset T_{\textsc{Dsc}^2}$ is bounded by $O(\log n+\log \tau)$ which is due to the fact that, for most iterations, we have quadratic convergence to the real roots, and the width of each interval is lower bounded by $2^{-\tilde{O}(n\tau)}$; see Lemma~\ref{lemma:bounds1} and  Theorem~\ref{thm:main1} for proofs. Hence, according to (\ref{treesize1}), the size of the overall recursion tree is bounded by \begin{align}
O(n'\cdot(\log n\tau))=O(\var(f,\mathcal{I}_0)\cdot (\log n\tau))=O(n\cdot(\log n\tau)).\label{res:treesize}
\end{align}
The latter result particularly shows that the size of the recursion tree is directly correlated to the number $n^*$ of non-zero coefficients of $f$ because instead of considering $\mathcal{I}_0=(-2^{\tau+1},2^{\tau+1})$, we can start with $(-2^{\tau+1},0)$ and $(0,2^{\tau+1}))$, and the total number of sign variations counted for both intervals is upper bounded by $2\cdot n^*$. 

For the bit complexity of our algorithm, we have to consider the costs for computing the polynomials $f_I(x)=(1+x)^n\cdot f((ax+b)/(1+x))$ as defined in (\ref{poly:fI}), where $I=(a,b)$ is an interval to be processed. In Section~\ref{sec:bitcomplexity}, we will show that the costs for the latter step are dominated (up to constant factors) by the computation of $f(x+a)$. For $I\in T_i$, the endpoints of $I$ are dyadic numbers of bitsize $O(\tau+\log w(J_i)^{-1})$ or less, and thus the computation of $f_I$ demands for $\tilde{O}(n^2(\tau+\log w(J_i)^{-1}))$ bit operations. In Lemma~\ref{boundIj}, we prove that we can order the roots $z_1,\ldots,z_n$ in a way such that $\log w(J_i)^{-1}<\log\sigma(z_i)^{-1}+O(\log^2 n)$ for all $i=1,\ldots,n'$. Then, it follows that computing $f_I$ demands for at most $$\tilde{O}(n^2(\tau+\log w(J_i)^{-1}))=\tilde{O}(n^2(\tau+\log\sigma(z_i)^{-1}+\log^2 n))=\tilde{O}(n^2(\tau+\log\sigma(z_i)^{-1}))$$
bit operations.
Thus, for the total cost, we obtain the bound
\begin{align*} \tilde{O}(n^3\tau+n^2\sum_{i=1}^n\log\sigma(z_i)^{-1})=\tilde{O}(n^3\tau)
\end{align*}
since $\sum_{i=1}^n\log\sigma(z_i)^{-1}=O(n\tau(\log n\tau))=\tilde{O}(n\tau)$ according to Lemma 19 in~\cite{sagraloff-complexity}.
 
\section{Algorithm and Analysis}\label{sec:algorithm}

\subsection{The Algorithm}

We first present our new algorithm denoted $\dcn$. For pseudo-code, we refer to the Appendix.\vspace{0.15cm}
\hrule\vspace{0.1cm}
$\dcn$ maintains a list $\mathcal{A}$ of active intervals $I$ with corresponding integers $N_I=2^{2^{n_I}}$, $n_I\in\N_{\ge 1}$, and a list $\mathcal{O}$ of isolating intervals, where we initially set $\mathcal{A}:=\{(\mathcal{I}_0,4)\}:=\{((-2^{\tau+1},2^{\tau+1}),4)\}$ and $\mathcal{O}:=\emptyset$. For $(I,N_I)\in\mathcal{A}$, $I=(a,b)$, we proceed as follows: We remove $I$ from $\mathcal{A}$ and compute the number $v:=\var(f,I)$ of sign variations for $f$ on $I$. 

\begin{enumerate} 
\item If $v=0$, we do nothing (i.e., $I$ is discarded). 
\item If $v=1$, then $I$ isolates a real root of $f$. Thus, we add $I$ to the list $\mathcal{O}$ of isolating intervals. 
\item For $v>1$, we proceed as follows:\vspace{0.25cm}
\begin{enumerate} 
\item For $i=1,2$, let 
\begin{align}
B_1:=(a,a+\frac{w(I)}{N_I})\quad\text{ and }\quad B_2:=(b-\frac{w(I)}{N_I},b)\label{def:B12}
\end{align}
be the left- and rightmost interval of size $w(I)/N_I$ contained in $I$, respectively. We compute $v_i:=\var(f,B_i)$: If one of the values $v_1$ or $v_2$ equals $v$, then the corresponding interval $B_i$ (at most one of the two disjoint intervals $B_i$ fulfills $\var(f,B_i)=v$) contains all roots of $f$ within $I$. Hence, we keep $I':=B_i$, discard $I\backslash I'$, and set $N_{I'}:=N_I^2$. That is, $(I',N_{I'}):=(B_i,N_I^2)$ is added to $\mathcal{A}$.
\item If both values $v_1$ and $v_2$ differ from $v$, we compute
\begin{align}
\lambda_1:=a-v\cdot\frac{f(a)}{f'(a)}\quad\text{and}\quad\lambda_2:=b-v\cdot\frac{f(b)}{f'(b)}.\label{def:lambda}
\end{align}
If $I$ contains a cluster $\mathcal{C}$ of $v$ nearby roots and $a$ (or $b$) has "reasonable" distance to $\mathcal{C}$, then $\lambda_1$ (or $\lambda_2$) constitutes a considerably better approximation of $\mathcal{C}$ than $a$ (or $b$). We check whether this is actually the case: For $i=1,2$, we first compute the point $a+k_i\cdot \frac{w(I)}{4N_I}$, with $k_i\in\{2,\ldots,4N_I-2\}$, which is closest to $\lambda_i$ (if there exist two equally close points, we choose the one with smaller index). In more mathematical terms,
\begin{align}
k_i:=\min(\max(\lfloor 4N_I(\lambda_i-a)\rfloor,2),4N_I-2).
\end{align}
Then, we define $I'_i$ to be the interval of length $w(I)/N_I$ centered at the subdivision point $a+k_i\cdot w(I)/4N_I$, that is, 
\begin{align}
I'_i:=(a+(k_i-2)\cdot \frac{w(I)}{4N_I},a+(k_i+2)\cdot \frac{w(I)}{4N_I})\subset I.\label{def:Ii}
\end{align} 
In particular, if $a+w(I)/4N_I\le \lambda_i\le b-w(I)/4N_I$, then $I'_i$ contains $\lambda_i$, and $\lambda_i$ has distance at least $w(I)/4N_I$ to both endpoints of $I_i'$. Now, we compute $v_1':=\var(f,I'_1)$ and $v_2':=\var(f,I'_2)$. If one of the two values $v_1'$ or $v_2'$ equals $v$, we keep the corresponding interval $I':=I'_i$ with $v_i'=v$ (if we count $v$ sign variations for both intervals $I_1'$ and $I_2'$, we just keep $I_1'$) and discard $I\backslash I'$. Finally, we add $(I',N_{I'}):=(I',N_I^2)$ to $\mathcal{A}$. 
\item If all values $v_1$, $v_2$, $v_1'$ and $v_2'$ differ from $v$, then we consider this an indicator that there is either no cluster of $v$ nearby roots or that such a cluster is not separated well enough from the remaining roots. Hence, in this situation, we fall back to bisection. That is, we split $I$ into two equally sized intervals $I_1=(a,m(I))$ and $I_2=(m(I),b)$ and add $(I_1,\max(4,\sqrt{N}))$ and $(I_2,\max(4,\sqrt{N}))$ to $\mathcal{A}$. Finally, if $f(m(I))=0$, we also add $[m(I),m(I)]$ to $\mathcal{O}$.
\end{enumerate}
\end{enumerate}\vspace{0.1cm}
\hrule\newpage

Correctness and termination of $\dcn$ are obvious because our starting interval $\mathcal{I}_0$ contains all real roots of $f$, we never discard intervals (or endpoints) that contain a root of $f$, and intervals are at least bisected in each iteration. In addition, we obtain the following bounds on the width of the intervals $I$ and the corresponding numbers $N_I$ produced by $\dcn$:

\begin{lemma}\label{lemma:bounds1}
For each interval $I$ produced by $\dcn$, we have $$2^{\tau+2}\ge w(I)\ge\sigma_f^3\cdot 2^{-2(\tau+2)}=2^{-\tilde{O}(n\tau)}\text{ and }4\le N_I\le 2^{2(\tau+2)}\cdot\sigma_f^{-2}=2^{\tilde{O}(n\tau)}.$$
In particular, for $I\in T_i$ (see Section~\ref{sec:idea} for the definition of the subtree $T_i\subset T_{\dcn}$), we have
$$2^{\tau+2}\ge w(I)\ge w(J_i)\text{ and }4\le N_I\le 2^{2(\tau+2)}\cdot w(J_i)^{-2},$$
with $J_i$ the special node corresponding to $T_i$.
\end{lemma}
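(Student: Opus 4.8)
The plan is to derive all four inequalities from three ingredients: the trivial containment $I\subseteq\mathcal{I}_0$; a monotonicity invariant linking $w(I)$ and $N_I$; and the termination estimate recalled right after Theorem~\ref{Obreshkoff}, namely that $\dcn$ never subdivides an interval of length $\le\sigma_f$ --- equivalently, $\var(f,I)\ge 2$ implies $w(I)>\sigma_f$. First I would record the easy facts. Since every produced interval sits inside $\mathcal{I}_0$, $w(I)\le w(\mathcal{I}_0)=2^{\tau+2}$. By construction each $N_I$ is $2^{2^{n_I}}$ with $n_I\ge 1$ --- it starts at $4$, a step~3(a)/3(b) (call it a \emph{Newton step}) squares it, and a bisection step~3(c) sends $2^{2^{n}}$ to $\max(4,2^{2^{n-1}})=2^{2^{\max(1,n-1)}}$ --- so $N_I\ge 4$ always. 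The same bookkeeping gives the parent--child relations I rely on: a Newton child $I$ of $\hat I$ satisfies $w(I)=w(\hat I)/N_{\hat I}$ and $N_I=N_{\hat I}^{2}$, while a bisection child satisfies $w(I)=w(\hat I)/2$ and $N_I=\max(4,\sqrt{N_{\hat I}})$.

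The core step is the invariant $w(I)\cdot N_I\le w(\hat I)\cdot N_{\hat I}$ for every produced $I\neq\mathcal{I}_0$ with parent $\hat I$. For a Newton step it is an equality, $w(I)N_I=(w(\hat I)/N_{\hat I})\,N_{\hat I}^{2}=w(\hat I)N_{\hat I}$; for a bisection step $w(I)N_I=\tfrac12 w(\hat I)\max(4,\sqrt{N_{\hat I}})\le w(\hat I)N_{\hat I}$, using $\max(4,\sqrt N)\le 2N$ for $N\ge 4$. Telescoping along the root-to-$I$ path yields $w(I)N_I\le w(\mathcal{I}_0)\cdot 4=2^{\tau+4}$ for every produced interval. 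This is the point that controls the (a priori double-exponential) growth of $N_I$ along a long run of consecutive Newton steps: the factor by which $N$ is squared up is compensated exactly by the factor by which $w$ is shrunk down.

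Combining these: for the width, a produced $I$ is $\mathcal{I}_0$, or a Newton child --- in which case $\var(f,I)=v\ge 2$ (by the way the step retains $I$), hence $w(I)>\sigma_f$ --- or a bisection child of a subdivided $\hat I$, hence $w(I)=w(\hat I)/2>\sigma_f/2$. So $w(I)\ge\sigma_f/2$, which is $\ge\sigma_f^{3}\cdot 2^{-2(\tau+2)}$ after absorbing the constant into the elementary estimate $\sigma_f\le 2^{O(\tau)}$ (Cauchy's bound; the sharper $\sigma_f\le 2^{\tau+1}$ needed for the stated exponent follows from square-freeness via the discriminant), and $=2^{-\tilde{O}(n\tau)}$ by the Davenport--Mahler bound $\log\sigma_f^{-1}=\tilde{O}(n\tau)$. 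For $N_I$: if $I$ is a Newton child then $w(I)>\sigma_f$, so the invariant gives $N_I\le 2^{\tau+4}/w(I)< 2^{\tau+4}\sigma_f^{-1}\le 2^{2(\tau+2)}\sigma_f^{-2}$; if $I=\mathcal{I}_0$ then $N_I=4$; if $I$ is a bisection child of a subdivided $\hat I$ then the same estimate gives $N_{\hat I}<2^{\tau+4}\sigma_f^{-1}$, whence $N_I=\max(4,\sqrt{N_{\hat I}})<2^{2(\tau+2)}\sigma_f^{-2}$. The refinements for $I\in T_i$ are the very same computation with $w(J_i)$ in place of $\sigma_f$: every $I\in T_i$ contains $J_i$, so $w(I)\ge w(J_i)$; and whenever $T_i$ is non-empty one has $w(J_i)\le 2^{\tau}$ (some node $I\in T_i$ strictly contains $J_i$ and is a proper subinterval of $\mathcal{I}_0$, so $w(J_i)<w(I)\le 2^{\tau+1}$, and all widths are powers of two), after which $N_I\le 2^{\tau+4}/w(I)\le 2^{\tau+4}/w(J_i)\le 2^{2(\tau+2)}w(J_i)^{-2}$, and likewise for a bisection child.

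The step I expect to be the main obstacle is not the algebra, which is short, but deploying the termination estimate cleanly: one must be sure that ``$w(I)\le\sigma_f\Rightarrow\var(f,I)\le 1$'' is genuinely unconditional and is not undermined by the phenomenon flagged in the footnote of Section~\ref{sec:idea}, where a cluster close to an endpoint of $I$ escapes being counted by $\var(f,I)$. It is not: that phenomenon concerns the $L_k$-half of Theorem~\ref{Obreshkoff} (the lower bound $\var(f,I)\ge\#\{\text{roots in }L_n\}$), whereas the termination estimate uses only the $A_k$-half ($\var(f,I)\le\#\{\text{roots in }A_1\}$), which holds unconditionally. Beyond that it is just the routine constant bookkeeping above, plus the trivial case $\var(f,\mathcal{I}_0)\le 1$ in which $T_{\dcn}$ is the single node $\mathcal{I}_0$ and every claim is immediate.
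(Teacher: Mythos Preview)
Your proof is correct and takes a genuinely different route from the paper. The paper argues pointwise: given $N_I>4$, it locates the last ancestor $J$ on the path to $I$ at which the $N$-value crossed from $\sqrt{N_I}$ to $N_I$ via a Newton step, observes that the resulting child $J'\supseteq I$ is non-terminal (since $\var(f,J')=\var(f,J)\ge 2$), and concludes $\sigma_f\le w(J')=w(J)/N_J\le 2^{\tau+2}/\sqrt{N_I}$; the width bound then follows by $w(I)\ge w(\hat I)/N_{\hat I}\ge\sigma_f/N_{\hat I}$. You instead establish the monotone invariant $w(I)\,N_I\le w(\hat I)\,N_{\hat I}$ (equality on Newton steps, inequality on bisection via $\max(4,\sqrt N)\le 2N$), telescope to $w(I)\,N_I\le 2^{\tau+4}$, and combine with the termination estimate to read off both bounds at once. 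Your argument is slightly cleaner and in fact yields the sharper inequalities $w(I)\ge\sigma_f/2$ and $N_I\le 2^{\tau+4}\sigma_f^{-1}$, which dominate the paper's stated constants whenever $\sigma_f\le 2^{\tau}$ (the only interesting regime). The paper's ancestor argument, on the other hand, gives the quadratic-in-$\sigma_f^{-1}$ bound directly without needing to compare $\sigma_f$ to $2^\tau$.

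One small correction: your aside that ``the sharper $\sigma_f\le 2^{\tau+1}$ \ldots\ follows from square-freeness via the discriminant'' is not right---the discriminant bounds the \emph{product} of root differences from \emph{below}, it does not give an upper bound on $\sigma_f$. What you actually have from Cauchy is $\sigma_f\le 2^{\tau+2}$, which falls short of implying $\sigma_f/2\ge\sigma_f^{3}\cdot 2^{-2(\tau+2)}$ and $2^{\tau+4}\sigma_f^{-1}\le 2^{2(\tau+2)}\sigma_f^{-2}$ by a factor of at most $4$. Since your bounds are tighter anyway and the lemma is only used through its $2^{\tilde O(n\tau)}$ consequence, this is a cosmetic discrepancy, not a gap. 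Your discussion of why the termination estimate is unaffected by the endpoint-cluster caveat is correct and well observed.
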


\begin{proof}
The inequalities $2^{\tau+2}\ge w(I)$ and $N_I\ge 4$ are trivial. For $N_I>4$, there must exist an interval $J\supset I$ with $N_J=\sqrt{N_I}$, and $J$ was replaced by an interval $J'\supseteq I$ of size $w(J)/N_J$. Since $J$ is non-terminal, $J'$ is also non-terminal because $\var(f,J')=\var(f,J)>1$. Thus, $\sigma_f\le w(J')=w(J)/N_J\le 2^{\tau+2}/\sqrt{N_I}$. This shows the upper bound for $N_I$. For the lower bound for $w(I)$, we consider the parent interval $J$ of $I$. Since $J$ is non-terminal, we have $w(J)\ge \sigma_f$, and thus $w(I)\ge w(J)/N_J\ge \sigma_f\cdot \sigma_f^2\cdot 2^{-2(\tau+2)}$. For $I\in T_i$, the bounds for $w(I)$ are trivial, and, in completely similar manner as above, we conclude that $2^{\tau+2}/\sqrt{N_I}\ge w(J_i)$.
\end{proof}

Throughout the following considerations, we call a subdivision step from $I$ to $I'\subset I$ \emph{quadratic} if $w(I')=w(I)/N_I$, and we call a subdivision step \emph{linear} if $I$ is split into two equally sized intervals $I_1$ and $I_2$. In a quadratic step, the integer $N_I$ is squared whereas, in a linear step, $N_{I'}:=\max(4,\sqrt{N_I})$ for each of the subintervals $I'=I_{1/2}$.

\subsection{Analysis of the Recursion Tree}

In this section, we prove that the size of each of the subtrees $T_i\subset T_{\dcn}$ as defined in Section~\ref{sec:idea} is bounded by $O(\log n+\log \tau)$. We first have to investigate into the following technical lemmata:

\begin{lemma}\label{lemma:sequence}
Let $w$, $w'\in\R^+$ be two positive reals with $w>w'$, and let $m\in\mathbb{N}_{\ge 1}$ be a positive integer. The sequence $(s_i)_{i\in\N_{\ge 1}}:=((x_i,n_i))_{i\in\N_{\ge 1}}$ is recursively defined as follows: $s_1=(x_1,n_1)=(w,m)$, and
$$s_i=\left(x_i,n_i\right):=\begin{cases}
  \left(\frac{x_{i-1}}{N_{i-1}},n_{i-1}+1\right),  & \text{if }\frac{x_{i-1}}{N_{i-1}}\ge w'\\
  \left(\frac{x_{i-1}}{2},\max(1,n_{i-1}-1)\right), & \text{if }\frac{x_{i-1}}{N_{i-1}}<w',
\end{cases}
$$
where $N_i:=2^{2^{n_{i}}}$ and $i\ge 2$. Then, the smallest index $i_0$ with $x_{i_0}\le w'$ is upper bounded by $8(n_1+\log\log \max(4,\frac{w}{w'}))$.
\end{lemma}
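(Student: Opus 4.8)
\emph{Approach.} The plan is to pass to additive coordinates, extract two structural facts about the resulting process — a ceiling on how large the level $n_i$ can get, and (the crux) a ceiling of $2$ on the number of ``quadratic'' branches taken at any fixed level — and then finish with elementary bookkeeping on the number of linear versus quadratic steps. Throughout, all logarithms are to base $2$; write $M:=\max(4,w/w')$. I would set $D_i:=\log x_i-\log w'$, so that $x_i\le w' \Leftrightarrow D_i\le 0$ and $i_0=\min\{i:D_i\le 0\}$, with $D_1=\log(w/w')\le\log M$. Since $\log N_{i-1}=2^{n_{i-1}}$, the recursion becomes: from $(D_{i-1},n_{i-1})$, if $D_{i-1}\ge 2^{n_{i-1}}$ (a \emph{q-step}) then $D_i=D_{i-1}-2^{n_{i-1}}$ and $n_i=n_{i-1}+1$; otherwise (an \emph{l-step}) $D_i=D_{i-1}-1$ and $n_i=\max(1,n_{i-1}-1)$. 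Every step strictly decreases $D$, so $(D_i)$ is strictly decreasing, $i_0$ is finite, and $D_i\le D_1$ for all $i$. If $D_1<2$ then no q-step is ever possible (a q-step needs $D\ge 2^{n}\ge 2$), all steps are l-steps, $i_0\le 3$, and the claim is trivial; so I would henceforth assume $D_1\ge 2$, whence $M=2^{D_1}$ and $\log\log M=\log D_1\ge 1$.

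\emph{Step 1 (the level stays small).} I would show $n_i\le K:=\max\bigl(n_1,\lfloor\log D_1\rfloor+1\bigr)$ for all $i$. Only q-steps raise $n$, and a q-step taken from a state of level $\nu$ requires $2^{\nu}\le D\le D_1$, i.e.\ $\nu\le\lfloor\log D_1\rfloor$, so immediately after any q-step $n\le\lfloor\log D_1\rfloor+1$; as $n$ starts at $n_1$ and is increased only by q-steps, the bound follows. Note $K\le\max(n_1,\log\log M+1)\le n_1+\log\log M$.

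\emph{Step 2 (at most two q-steps per level — the main obstacle).} For every integer $k\ge 1$ I would prove that at most two q-steps are ever taken from level $k$. Suppose q-steps from level $k$ occur at times $j_1<j_2$ with no q-step from level $k$ strictly between them. For $j_1<i<j_2$ one cannot have $n_i=k$: a q-step there contradicts consecutiveness, and an l-step there would force $D_i<2^k$, impossible since $D_{j_2}\ge 2^k$ and $D$ is decreasing. Since $n_{j_1+1}=k+1$, since $n$ changes by at most $1$ per step, and since $n$ never equals $k$ strictly between, we get $n_i\ge k+1$ for $j_1<i<j_2$; hence the step $j_2-1\to j_2$, which brings $n$ down to $k$, must be an l-step from level $k+1$, so $D_{j_2-1}<2^{k+1}$ and thus $D_{j_2}<2^{k+1}$. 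Then $D_{j_2+1}=D_{j_2}-2^k<2^k$, so $D_i<2^k$ for all $i>j_2$ and no third q-step from level $k$ is possible. (The case $k=1$ is the same: an l-step at level $1$ leaves the level unchanged but forces $D<2$, so it cannot immediately precede a q-step from level $1$.) Hence the total number $Q$ of q-steps satisfies $Q=\sum_{k\ge 1}(\#\text{ q-steps from level }k)\le 2K$. I expect the verification of the ``$n_i\ge k+1$ for $j_1<i<j_2$'' claim — that between two q-steps from level $k$ the level never returns to $k$ — to be the delicate point.

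\emph{Step 3 (bookkeeping and conclusion).} Let $L$ be the number of l-steps and $L_1$ the number of them at level $1$. An l-step at level $1$ forces $D<2$, hence $D<1$ after it, so at most one further step (again an l-step at level $1$) occurs before $D\le 0$; thus $L_1\le 2$. Every other l-step lowers $n$ by exactly $1$ and every q-step raises it by exactly $1$, so telescoping $n$ from $n_1$ to $n_{i_0}\ge 1$ yields $n_{i_0}-n_1=Q-(L-L_1)$, i.e.\ $L\le Q+n_1+1$. Therefore
\[
i_0=1+Q+L\le 2Q+n_1+2\le 4K+n_1+2\le 5n_1+4\log\log M+2\le 8\bigl(n_1+\log\log M\bigr),
\]
where the last inequality uses $n_1\ge 1$ (so $2\le 3n_1$). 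This is exactly the claimed bound $i_0\le 8\bigl(n_1+\log\log\max(4,\tfrac{w}{w'})\bigr)$.
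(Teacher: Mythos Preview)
Your proof is correct and takes a genuinely different route from the paper's.

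The paper stays in multiplicative coordinates and argues by pattern analysis on the string of ``strong'' (quadratic) and ``weak'' (linear) indices: it first shows that the initial run of strong indices has length at most $k+1$ where $2^{-2^{k+1}}<w'/w\le 2^{-2^{k}}$ (Claim~1), then that after the first weak index the label sequence can contain neither $\mathbf{SS}$ nor $\mathbf{SWSWS}$ as a factor (Claim~2), and that every weak index before $i_0$ has $n_i\ge 2$ (Claim~3). It then partitions the tail into maximal blocks with no two consecutive weak labels; each block has length $\le 5$ and forces $n_i$ down by one, so the number of blocks is $\le n_1+k'$, and the bound follows.

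Your argument, by contrast, passes to additive coordinates $D_i=\log x_i-\log w'$ and isolates one clean structural fact: \emph{at most two q-steps are ever taken from any fixed level $k$}. Your proof of this (between two consecutive q-steps from level $k$ the level must stay $\ge k+1$, so the step into the second one is an l-step from level $k+1$, giving $D<2^{k+1}$; after the second q-step $D<2^k$, precluding a third) is tidy and makes the counting transparent. The telescoping on $n_i$ in Step~3 then replaces the paper's block partition.

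Your route is a bit more economical: since q-steps can only occur from levels $\le\lfloor\log D_1\rfloor$, Step~2 actually yields $Q\le 2\lfloor\log D_1\rfloor$, which would give the sharper $i_0\le n_1+4\log\log M+O(1)$ if you cared to push it. The paper's pattern-based argument, on the other hand, mirrors the algorithmic picture (strong $\leftrightarrow$ successful Newton step, weak $\leftrightarrow$ fallback to bisection) more directly. Both establish exactly the stated bound.
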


\begin{proof}
Throughout the following consideration, we call an index $i$ \emph{strong} (\textbf{S}) if $x_i/N_i\ge w'$ and \emph{weak} (\textbf{W}), otherwise. If $w/4<w'$, then each $i\ge 1$ is weak, and thus $i_0\le 3$. For $w/4\ge w'$, let $k\in\mathbb{N}_{\ge 1}$ be the unique integer with $$2^{-2^{k+1}}<w'/w\le 2^{-2^{k}}.$$
Then, $k\le\log\log\frac{w}{w'}$, and since $x_i\le x_{i-1}/2$ for all $i$, there must exist an index $i$ which is weak. Let $k'$ denote the smallest weak index.\\

\noindent\emph{Claim 1:} $k'\le k+1$

\noindent Assume otherwise, then the indices $1$ to $k$ are all strong. Hence,
$$x_{k+1}=w\cdot 2^{-(2^{m}+2^{m+1}+\cdots+2^{m+k-1})}=
w\cdot 2^{-2^{m}(2^0+2^1+\cdots+2^{k-1})}=w\cdot 2^{-2^{m}(2^{k}-1)}\le 4w\cdot 2^{-2^{k+1}}<4w',
$$
and $n_{k+1}>1$. It follows that $k+1$ is weak, a contradiction.\\

Let us now consider the subsequence $\mathcal{S}=k',k'+1,\ldots,i_0-3$:\\

\noindent\emph{Claim 2:} $\mathcal{S}$ contains no subsequence of type ...\textbf{SS}... or ...\textbf{SWSWS}...

\noindent If there exists a weak index $i$ and two strong indices $i+1$ and $i+2$, then $x_{i}/N_i>x_{i+2}/N_{i}\ge x_{i+2}/N_{i+2}\ge w'$ contradicting the fact that $x_i/N_i<w'$. Since $\mathcal{S}$ starts with a weak index, the first part of our claim follows. For the second part, assume that $i$, $i+2$ and $i+4$ are strong. Then, $i+1$ and $i+3$ are weak, and thus
$$w'\le x_{i+4}/N_{i+4}<x_{i+2}/(N_{i+2}\cdot N_{i+4})<x_i/(N_i\cdot N_{i+2}\cdot N_{i+4})=x_i/N_i^3=x_{i+1}/N_{i+1}$$ contradicting the fact that $i+1$ is weak.\\ 

\noindent\emph{Claim 3:} If $i$ is weak and $i<i_0$, then $n_i\ge 2$.

\noindent Namely, if $i$ is weak and $n_i=1$, then $x_i/4=x_i/N_i<w'$, and thus $x_{i_0-1}<w'$ which contradicts the definition of $i_0$.\\

We now partition the sequence $\mathcal{S}$ into maximal subsequences $\mathcal{S}_1,\mathcal{S}_2,\ldots,\mathcal{S}_r$ such that each $\mathcal{S}_j$, $j=1,\ldots,r$, contains no two consecutive weak elements. Then, according to our above results, each $\mathcal{S}_j$, with $j<r$, is of type \textbf{W}, \textbf{WSW}, or \textbf{WSWSW}. The last subsequence $\mathcal{S}_r$ (with last index $i_0-3$) is of type \textbf{W}, \textbf{WS}, \textbf{WSW}, \textbf{WSWS}, or \textbf{WSWSW}. After each $S_j$, with $j<r$, the number $n_i$ decreases by one, and thus we must have $r\le n_1+k'$ since we start with $n_{k'}=n_1+k'-1$ and, in addition, $n_i\ge 2$ for all weak $i$. Since the length of each $\mathcal{S}_j$ is bounded by $5$, it follows that 
$$i_0= i_0-3 +3\le k'+5r+3\le 5(n_1+k')+k'+3\le 8(n_1+k).$$
\end{proof}

\begin{lemma}\label{lem:lensarea}
Let $I=(a,b)$ an arbitrary interval, $A_n$ the corresponding Obreshkoff area and $L_n$ the Obreshkoff lens for $I$. Then:\\\\
\noindent (1) For $I'=(a',b')\subset I$ with $a\neq a'$ and $b\neq b'$, the Obreshkoff area $A_n'$ for $I'$ is completely contained within the lens $L_n$ if $$\min(|a-a'|,|b-b'|)>8n^2 w(I').$$
In the latter situation, it holds that $$|x-\xi|>\frac{1}{4n}\cdot\left(\min(|a-a'|,|b-b'|)-8nw(I')\right)$$ for all $x\notin L_n$ and all $\xi\in A_n'$.\\\\   
\noindent (2) For $I'=(a',b')$ with $I'\cap I=\emptyset$, the Obreshkoff area $A_n'$ for $I'$ does not intersect $A_n$ if $$\operatorname{dist}(I,I')>4n^2\cdot\min(w(I),w(I')),$$ where $\operatorname{dist}(I,I')$ denotes the distance between the two intervals $I$ and $I'$.
\end{lemma}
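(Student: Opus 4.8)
The plan is to derive both parts from two elementary facts about Obreshkoff discs. (a) For any interval $J$, the Obreshkoff area $A_n$ is the union of two discs, each of radius $\tfrac{w(J)}{2}\csc\tfrac{\pi}{n+2}$ with center at distance $\tfrac{w(J)}{2}\cot\tfrac{\pi}{n+2}$ from $m(J)$ on the perpendicular to $J$; hence $A_n\subseteq\overline B(m(J),\tfrac{w(J)}{2}\cot\tfrac{\pi}{2(n+2)})$, and since $\cot t<1/t$ on $(0,\tfrac\pi2)$ the radius is less than $\tfrac{n+2}{\pi}w(J)\le n\,w(J)$ for $n\ge1$. (b) A point $z$ lies in $A_n$ iff it sees the open segment $J$ under an angle exceeding $\tfrac{\pi}{n+2}$, and in $L_n$ iff it sees $J$ under an angle exceeding $\pi-\tfrac{\pi}{n+2}$ (this is exactly Figure~\ref{fig:Obreshkoff}). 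Throughout I normalize $a=0$, $b=w:=w(I)$, and write $\delta:=\min(|a-a'|,|b-b'|)$ and $\theta:=\tfrac{\pi}{n+2}$; note $w=|a-a'|+w(I')+|b-b'|$, so $w>2\delta$.

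For part (1) I first show the containment. Fix $\xi\in A_n'$. By (a), $|\xi-m(I')|<n\,w(I')$, hence $|\operatorname{Im}\xi|<n\,w(I')$; and since $m(I')$ has distance $\ge\delta+\tfrac{w(I')}{2}$ from both $a$ and $b$, the orthogonal projection $p:=(\operatorname{Re}\xi,0)$ of $\xi$ onto the real axis lies in $(a,b)$ with $\min(|ap|,|pb|)>\delta-n\,w(I')\ge 7n^2w(I')$ (using $\delta>8n^2w(I')$). Then $\angle a\xi b=\angle a\xi p+\angle p\xi b$ with $\tan\angle a\xi p=|ap|/|\operatorname{Im}\xi|>7n$ and similarly $\tan\angle p\xi b>7n$, so by $\arctan t>\tfrac\pi2-\tfrac1t$ we get $\angle a\xi b>\pi-\tfrac2{7n}>\pi-\theta$, i.e.\ $\xi\in L_n$ by (b), proving $A_n'\subset L_n$. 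For the distance estimate: $L_n$ is convex and $\xi$ is interior to it, so $|x-\xi|\ge\operatorname{dist}(\xi,\partial L_n)$ whenever $x\notin L_n$, and it remains to bound $\operatorname{dist}(\xi,\partial L_n)$ below. The two arcs making up $\partial L_n$ lie on the circles of radius $r_n=\tfrac w2\csc\theta$ with centers $O_1,O_2$ at distance $d_n=\tfrac w2\cot\theta$ below and above the real axis; as $\xi$ is inside both discs, $\operatorname{dist}(\xi,\partial L_n)\ge r_n-\max_i|\xi-O_i|$. Using $r_n^2=d_n^2+\tfrac{w^2}{4}$ this expands to $r_n^2-\max_i|\xi-O_i|^2=u(w-u)-q^2-2qd_n$ with $u:=\operatorname{Re}\xi\in(0,w)$, $q:=|\operatorname{Im}\xi|$; dividing by $r_n+\max_i|\xi-O_i|<2r_n$ and inserting $u(w-u)\ge\tfrac w2(\delta-n\,w(I'))$, $q<\tfrac{n+2}{\pi}w(I')$, $w>2\delta>16n^2w(I')$, $\sin\theta\ge\tfrac2{n+2}$ and $\cot\tfrac\theta2<\tfrac2\theta$ yields $\operatorname{dist}(\xi,\partial L_n)>\tfrac{\delta\sin\theta}{2}-\tfrac{n+2}{\pi}w(I')-2w(I')$. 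Comparing with the target, everything reduces to the inequality $\tfrac{\delta\sin\theta}{2}-\tfrac\delta{4n}\ge(\tfrac{n+2}{\pi}+\tfrac1{16n})w(I')$, which for $n\in\{1,2\}$ holds because its left side is positive while its right side is small, and for $n\ge3$ holds because $\delta>8n^2w(I')$ forces the left side above $n\,w(I')$. This gives $\operatorname{dist}(\xi,\partial L_n)>\tfrac1{4n}(\delta-8n\,w(I'))$.

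For part (2), assume w.l.o.g.\ that $I$ lies left of $I'$, so $b\le a'$ and $\operatorname{dist}(I,I')=a'-b$, and suppose $\operatorname{dist}(I,I')>4n^2\min(w(I),w(I'))$. If $w(I)\le w(I')$, take any $z\in A_n$; by (a), $|z-m(I)|<n\,w(I)$, so $\operatorname{Re}z<b+n\,w(I)<a'$, $a'-\operatorname{Re}z>\operatorname{dist}(I,I')-n\,w(I)>3n^2w(I)$, and $|\operatorname{Im}z|<n\,w(I)$. Since $a'-z$ and $b'-z$ have positive real parts and equal imaginary parts, $\angle a'zb'=\arctan\tfrac{|\operatorname{Im}z|}{a'-\operatorname{Re}z}-\arctan\tfrac{|\operatorname{Im}z|}{b'-\operatorname{Re}z}\le\tfrac{|\operatorname{Im}z|}{a'-\operatorname{Re}z}<\tfrac1{3n}<\theta$, so $z\notin A_n'$ by (b); hence $A_n\cap A_n'=\emptyset$. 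If instead $w(I')\le w(I)$, the symmetric argument with $I$ and $I'$ interchanged (so that every $z\in A_n'$ now lies to the right of $b$) shows $\angle azb<\tfrac1{3n}<\theta$ for all $z\in A_n'$, and again $A_n\cap A_n'=\emptyset$. The point that makes this robust to wildly different interval widths is that a point which is \emph{beyond an endpoint} of $I'$ and at distance $\gg n\,w(I)$ from it sees $I'$ under a tiny angle no matter how long $I'$ is, because the two rays to the endpoints of $I'$ are nearly parallel.

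The step I expect to be the main obstacle is the distance bound in part (1). The only easily available localization of $A_n'$ is the disc $\overline B(m(I'),n\,w(I'))$, which already costs an error term that is linear in $n$, whereas the statement leaves only $8n\,w(I')$ inside the bracket and the factor $\tfrac1{4n}$ outside. So one has to keep every trigonometric estimate sharp (all angles occurring are at most $\pi/3$) and, decisively, use the quadratic slack $\delta>8n^2w(I')$ in the hypothesis to absorb the linear-in-$n$ error into the leading term $\tfrac{\delta\sin\theta}{2}=\Theta(\delta/n)$; the cases $n\in\{1,2\}$, where the constants are tightest, are cleanest to verify by hand.
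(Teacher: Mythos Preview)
Your argument is correct and takes a genuinely different route from the paper. You work throughout with the inscribed-angle characterization of $A_n$ and $L_n$ (a point lies in $A_n$ iff it sees $I$ under an angle exceeding $\theta=\pi/(n+2)$, and in $L_n$ iff the angle exceeds $\pi-\theta$), whereas the paper never computes a subtended angle. For part~(1) the paper simply bounds the distance from $a'$ to $\partial L_n$ from below by the distance from $a'$ to the straight chord $\overline{ac}$, where $c$ is the apex of $L_n$; since that chord meets the real axis at angle $\pi/(2(n+2))$, this distance is $|a-a'|\sin\frac{\pi}{2(n+2)}>|a-a'|/(4n)$, and then subtracting the radius $2n\,w(I')$ of the disc containing $A_n'$ yields both the containment and the distance estimate in one line. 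For part~(2) the paper separates $A_n$ from $A_n'$ by the line through $b$ of slope $\tan\theta$, again reducing everything to a single sine estimate. Your approach buys a self-contained computation that needs no auxiliary picture, at the cost of the somewhat delicate lower bound on $r_n-\max_i|\xi-O_i|$ and the final case split $n\in\{1,2\}$ versus $n\ge 3$; the paper's chord-line trick is shorter because it linearizes $\partial L_n$ from the start and thereby avoids any case analysis.
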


\begin{proof}
(1) In a first step, we compute the radius $r'$ of the Obreshkoff discs $\underline{C}_n'$ and $\overline{C}_n'$ for the interval $I'=(a',b')$: A point $\xi$ on the boundary of the Obreshkoff area $A_n'$ (except $a'$ and $b'$) sees $I'$ under an angle $\gamma=\pi/(n+2)$; see Figure \ref{fig:Obreshkoff} and \ref{fig:Obreshkoff2}. Hence, from the extended Sine Theorem, it follows that $$r'=\frac{w(I')}{2\sin(\gamma)}=\frac{w(I')}{2\sin(\pi/(n+2)}<\frac{(n+2)w(I')}{\pi}<n\cdot w(I')$$
since $\sin x>x/2$ for all $x\in (0,\pi/4]$. In particular, each point $z$ within the Obreshkoff area $A_n'$ has distance at most $2r'<2n\cdot w(I')$ from any point within $I$. W.l.o.g, we assume that $|a-a'|\le |b-b'|$. Then, the distance from $a'$ to the boundary of the Obreshkoff lens $L_n$ for $I$ is bounded by the distance $\delta$ from $a'$ to the line $\overline{ac}$, where $c$ denotes the topmost point of $L_n$. Since $\overline{ac}$ intersects the $x$-axes in an angle of $\pi/(2(n+2))$, we have $\delta=|a-a'|\sin \pi/(2(n+2))>|a-a'|/(4n)$. Thus, $A_n'\subset L_n$ if $|a-a'|/(4n)>2n\cdot w(I')$ or $|a-a'|>8n^2w(I')$. For $|a-a'|>|b-b'|$, a similar argument shows that $A_n'\subset L_n$ if $|b-b'|>8n^2w(I')$. In addition, if $\min(|a-a'|,|b-b'|)>8n^2 w(I')$, then each point $\xi\in A_n'$ has distance at least $\delta-2nw(I')>\min(|a-a'|,|b-b'|)/(4n)-2nw(I')$ from any point located outside of $L_n$.\\ 
(2) W.l.o.g., we can assume that $w(I')\le w(I)$ and $a'\ge b$. Let $L$ be the line passing through $b$ which intersects the $x$-axes in an angle of $\pi/(n+2)$. Then, the upper part of the Obreshkoff area $A_n$ lies completely on one side of this line. Now, if $A_n'$ lies completely on the other side of $L$, then, by symmetry, $A_n$ and $A_n'$ do not share a common point. We have already argued that $A_n'$ is contained within the disc of radius $2nw(I')$ centered at $a'$. Hence, if the distance $\delta':=\operatorname{dist}(a',L)$ from $a'$ to $L$ is larger than $2nw(I')$, then $A_n\cap A_n'=\emptyset$. We have $\delta'=|a'-b|\sin(\pi/(n+2))>|a'-b|/2n=\operatorname{dist}(I,I')/2n$, and thus our claim follows.  
\end{proof}

\begin{figure}[t]
\begin{center}
\includegraphics[width=13.5cm]{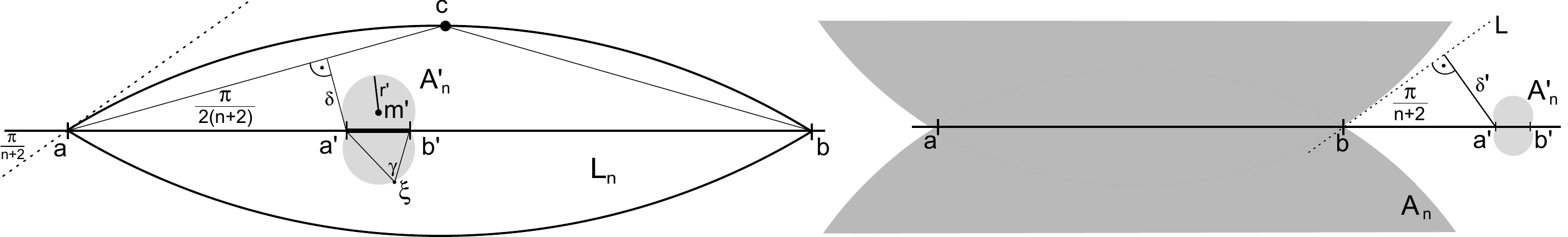}\end{center}
\caption{\label{fig:Obreshkoff2}On the left figure, $c$ denotes the topmost point of the Obreshkoff lens $L_n$ for $I=(a,b)$. If $|a-a'|\le w(I)/2$, then the distance from $a'$ to the boundary of $L_n$ is bounded by the distance $\delta$ from $a'$ to $\overline{ac}$. The radius $r'$ of the Obreshkoff discs $\underline{C}_n'$ and $\overline{C}_n'$ for $I'=(a',b')$ is bounded by $n\cdot w(I')$ due to the extended Sine Theorem. The right figure shows the Obreshkoff areas for the intervals $I$ and $I'$, respectively.}
\end{figure}

We now turn to the analysis of the subtrees $T_i\subset T_{\dcn}$ as defined in Section~\ref{sec:idea}. There, we have already argued that each $T_i$ constitutes a chain of intervals $I_{1}=(a_1,b_1)\supset I_2=(a_2,b_2)\supset\cdots\supset I_s=(a_s,b_s)$ ``connecting'' the special node $J_i$ with the special node $J_k$ of minimal length that contains $J_i$. In the following Theorem, we will show that, for all but $O(\log n)$ many $j$, the sequence $(w(I_j),n_{I_j})=(w(I_j),\log\log N_{I_j})$ behaves similar to the sequence $(x_j,n_j)$ as defined in Lemma~\ref{lemma:sequence}. As a result, we obtain the following bound on $|T_i|$:

\begin{theorem}\label{thm:main1}
For each special node $J_i$, the corresponding subtree $T_i\subset T_{\dcn}$ has size
$$
|T_i|=O(\log n+\log \tau).
$$  
\end{theorem}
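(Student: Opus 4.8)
The plan is to analyze the chain $I_1 = (a_1,b_1) \supset I_2 \supset \cdots \supset I_s$ making up $T_i$ and argue that, apart from a bounded number of ``bad'' steps, the pair $(w(I_j), n_{I_j})$ evolves exactly according to the recursion of Lemma~\ref{lemma:sequence}, so that the length of the chain is governed by that lemma's bound $8(n_1 + \log\log\max(4,w/w'))$. Here $w = w(I_1) \le 2^{\tau+2}$ and $w' = w(J_i)$, which by Lemma~\ref{lemma:bounds1} satisfies $\log(w')^{-1} = \tilde O(n\tau)$; hence $\log\log(w/w') = O(\log(n\tau)) = O(\log n + \log\tau)$, and since the initial $n_1 = \log\log N_{I_1}$ is also $O(\log(n\tau))$ again by Lemma~\ref{lemma:bounds1}, Lemma~\ref{lemma:sequence} would give $|T_i| = O(\log n + \log\tau)$ once the correspondence is established.

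First I would recall how each step of $\dcn$ acts inside a chain $T_i$: since every node of $T_i$ is non-special and non-terminal, when we process $I_j$ exactly one child carries the full count $v = \var(f,I_j)$ (the others have zero sign variations and are discarded). A \emph{quadratic} step replaces $I_j$ by a subinterval $I_{j+1}$ of width $w(I_j)/N_{I_j}$ and sets $N_{I_{j+1}} = N_{I_j}^2$, i.e.\ $n_{I_{j+1}} = n_{I_j}+1$; a \emph{linear} (bisection) step halves the width and sets $n_{I_{j+1}} = \max(1, n_{I_j}-1)$. This is precisely the branching in Lemma~\ref{lemma:sequence}, \emph{provided} the case distinction there — quadratic if $x_{j}/N_{I_j} \ge w'$, linear otherwise — matches what $\dcn$ actually does. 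So the crux is: if $w(I_j)/N_{I_j} \ge w(J_i)$, then the quadratic Newton/Abbott test must succeed; and the chain must terminate (the last interval reaching width $\le w'$ must be special, i.e.\ leave $T_i$) once the width would drop below $w'$.

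The technical heart, and the step I expect to be the main obstacle, is showing that when $w(I_j)/N_{I_j}$ is still $\ge w(J_i)$ the quadratic step succeeds. The idea: because $J_i \subseteq I_{j+1}$ would be forced (the chain ends at $J_i$), the cluster $\mathcal C$ of $v$ roots witnessed by the long chain sits inside $J_i$, hence well inside $I_j$ at a scale much larger than $w(I_j)/N_{I_j}$. One must then show that one of the four test intervals $B_1, B_2, I_1', I_2'$ of width $w(I_j)/N_{I_j}$ captures exactly the $v$ roots of $\mathcal C$ and no others, so that $\var(f,\cdot) = v$ there: if $\mathcal C$ is near an endpoint of $I_j$, the boundary intervals $B_1$ or $B_2$ do the job via Theorem~\ref{Obreshkoff} (the Obreshkoff lens of $B_i$ contains $\mathcal C$ while its Obreshkoff area excludes the other roots — this is where Lemma~\ref{lem:lensarea}, especially part (1), is invoked to control which roots lie in the lens versus outside); otherwise $a$ or $b$ has ``reasonable distance'' to $\mathcal C$ so the Newton point $\lambda_1$ or $\lambda_2$ lands close to $\mathcal C$ and $I_i'$ contains exactly $\mathcal C$. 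Making ``reasonable distance'' and ``close'' quantitative — i.e.\ proving a one-step Newton-contraction estimate for a cluster of $v$ roots rather than an exact $v$-fold root, with explicit constants matching the $8n^2$, $4N_I$ slack built into the algorithm — is the delicate part; one will need bounds on $|f(t)/f'(t) - (\xi - t)|$ in terms of the cluster radius and the separation of $\mathcal C$ from the remaining roots, presumably following the style of Abbott's QIR analysis. I would also need to handle the $O(\log n)$ exceptional steps alluded to before the theorem statement — e.g.\ the few initial iterations before the chain ``locks on'' to the cluster, or boundary effects when $\mathcal C$ sits extremely close to an endpoint of some $I_j$ so that not all its roots are counted by $\var(f,I_j)$ — by absorbing them into the additive $O(\log n)$ term. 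Finally, I would assemble the pieces: the number of quadratic steps before the width drops below $w'$ is $O(\log n + \log\tau)$ by Lemma~\ref{lemma:sequence}, the number of linear steps is of the same order by the same lemma's accounting (each maximal block between weak indices decreases $n_i$, and $n_i \ge 2$ on weak indices), plus $O(\log n)$ exceptional steps, giving $|T_i| = O(\log n + \log\tau)$.
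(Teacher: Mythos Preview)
Your overall plan matches the paper's: reduce the chain $T_i$ to an instance of Lemma~\ref{lemma:sequence}, after accounting for some exceptional initial steps, and invoke the Newton estimate to certify the quadratic branch. Two points, however, deserve correction.

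First, your treatment of the boundary intervals $B_1,B_2$ via Theorem~\ref{Obreshkoff} and Lemma~\ref{lem:lensarea}(1) does not quite work: $B_1=(a_j,a_j+w(I_j)/N_{I_j})$ shares the left endpoint with $I_j$, so the hypothesis $a\ne a'$ of Lemma~\ref{lem:lensarea}(1) fails and you cannot conclude $A_n'\subset L_n$. The paper instead dispatches this case by \emph{subadditivity} (Theorem~\ref{subad}): whenever an endpoint of $I_j$ has not yet moved (say $a_j=a_s$), then $I_s\subset B_1\subset I_j$ as soon as $w(I_j)/N_{I_j}\ge w(I_s)$, and $\var(f,I_s)=\var(f,I_j)=v$ forces $\var(f,B_1)=v$. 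This handles the entire initial stretch up to the first index $s_1$ where both endpoints have moved, and that stretch already matches Lemma~\ref{lemma:sequence} exactly.

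Second, your ``crux'' statement---that the quadratic test succeeds whenever $w(I_j)/N_{I_j}\ge w(J_i)$---is too strong as stated, and the exceptional phase is not $O(\log n)$ but $O(\log n+\log\tau)$. The Newton estimate needs two things you cannot assume from the outset: that the Obreshkoff area of $I_j$ contains \emph{exactly} $v$ roots (not merely at least $v$), and that the remaining $n-v$ roots are at distance $\gtrsim N_{I_j}\,w(I_j)$ from the endpoints. The paper therefore inserts intermediate indices $s_1\le s_2\le s_3$: $s_2=s_1+O(\log n)$ guarantees via Lemma~\ref{lem:lensarea}(1) that $A_n^{(j)}$ holds exactly $v$ roots, and $s_3=s_2+O(\log n+\log\tau)$ guarantees $w(I_j)\le w(I_{s_2})/N_{I_j}$, hence the outer-root separation. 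Bounding $s_3-s_2$ requires a separate combinatorial argument (any quadratic step followed by a linear one yields $w(I_{j''+1})<w(I_{s_2})/N_{I_{j''+1}}$, and both quadratic and linear runs have length at most $\log\log N_{\max}=O(\log n+\log\tau)$). Only for $j\ge s_3$ does the paper prove the quadratic step succeeds, and the threshold is $w(I_j)/N_{I_j}\ge 68n\,w(I_s)$, i.e.\ Lemma~\ref{lemma:sequence} is applied with $w'=68n\,w(I_s)$ rather than $w(J_i)$; the extra factor $n$ is absorbed by a final $\log(68n)$ tail. None of this changes the final $O(\log n+\log\tau)$ bound, but your sketch glosses over precisely the mechanism that makes it work.
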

\begin{proof}
We first consider the special case where $a_1=a_2=\cdots=a_s$, that is, in each subdivision step, the leftmost interval has been chosen. Since $v=\var(I_1)=\cdots=\var(I_s)$, Theorem~\ref{subad} implies that $\var(f,I)=v$ for each interval $I$ with $I_s\subset I\subset I_1$. In particular, if $w(I_j)/N_{I_j}\ge w(I_s)$, we count $v$ sign variations for the interval $B_1=(a_j,a_j+w(I_j)/N_{I_j})=(a_s,a_s+w(I_j)/N_{I_j})$ as defined in (\ref{def:B12}). Thus, the subdivision step from $I_j$ to $I_{j+1}$ is quadratic in this case. Then, for $j=1,\ldots,s-1$, the sequence $(w(I_j),n_{I_j})$ coincides with the sequence $(x_j,n_j)$ as defined in Lemma~\ref{lemma:sequence}, where $w:=w(I_1)$, $w':=w(I_s)$ and $n_1=m:=n_{I_1}$. Namely, if $w(I_j)/N_{I_j}\ge w'$, we have $w(I_{j+1})=w(I_j)/N_{I_j}$ and $n_{I_{j+1}}=1+n_{I_j}$, and, otherwise, we have $w(I_{j+1})=w(I_j)/2$ and $n_{I_{j+1}}=\max(1,n_{I_j}-1)$. Hence, according to Lemma~\ref{lemma:sequence}, it follows that $s$ is bounded by $$8(n_{I_1}+\log\log\max(4,w(I_1)/w(I_s)))=O(\log n+\log \tau),$$
where we used the bounds on $n_{I_1}$ and $w(I_s)$ from Lemma~\ref{lemma:bounds1}. An analogous argument shows the same bound for $s$ in the case where $b_1=b_2=\cdot b_s$.

We now turn to the more general case, where $a_1\neq a_s$ and $b_1\neq b_s$: Let $s_1\in\{1,\ldots,s\}$ be the smallest index with $a_{s_1}\neq a_1$ and $b_{s_1}\neq b_1$. Then, due to the above argument, $s_1$ is bounded by $O(\log n+\log\tau)$. Furthermore, $\min(|a_{1}-a_{s_1}|,|b_1-b_{s_1}|)\ge w(I_{s_1})/4$, and thus
\begin{align*}
&\min(|a_1-a_{j}|,|b_1-b_{j}|)\ge 2^{j-s_1-2}w(I_j)\\
&\hspace{-0.5cm}\Rightarrow \frac{1}{4n}\cdot\left(\min(|a_1-a_j|,|b_1-b_j|)-8nw(I_j)\right)\ge w(I_j)\left(\frac{2^{j-s_1-4}}{n}-8n\right)\text{ for all }j\ge s_1.
\end{align*}
Hence, with $s_2:=s_1+\lceil\log(16n^3)\rceil+4=O(s_1)+O(\log n)$, this yields
\begin{align*}
\frac{1}{4n}\cdot\left(\min(|a_1-a_{j}|,|b_1-b_{j}|)-8nw(I_{j})\right)\ge 8n^2 w(I_{j})\text{ for all }j\ge s_2.
\end{align*}
Then, from Theorem~\ref{Obreshkoff} and Lemma~\ref{lem:lensarea}, we conclude 
that, for $j\ge s_2$, the Obreshkoff area $A_{n}^{(j)}$ for $I_{j}$ contains 
exactly $v$ roots $z_1,\ldots,z_v$ of $f$ because the Obreshkoff lens $L_{n}^{(1)}$ for $I_{1}$ contains at most $v$ roots and $A^{(j)}_n\subset L^{(1)}_n$ contains at least $v$ roots. In particular, the Obreshkoff area $A^{(s)}_n$ for $I_s$ must contain $z_1,\ldots,z_v$. In the proof of Lemma~\ref{lem:lensarea}, we have already argued that each point within $A^{(s)}_n$ has distance less than $2nw(I_s)$ from any point within $I_s$, and thus
\begin{align}
|x-z_i|<2nw(I_s),\text{ for all }i=1,\ldots,v\text{ and all }x\in I_s.
\end{align}
The remaining roots 
$z_{v+1},\ldots,z_n$ of $f$ are located outside the Obreshkoff lens $L^{(1)}_n$ for 
$I_1$, and thus their distance to an arbitrary point within $I_j$ is larger than $8n^2w(I_{s_2})$. Namely, according to Lemma~\ref{lem:lensarea}, the distance from any of the roots $z_{v+1},\ldots,z_n$ to an arbitrary point within $I_{s_2}$ is lower bounded by $8n^2w(I_{s_2})$ and $I_{s_2}$ contains $I_j$.
The following consideration further shows the existence of an $s_3=s_2+O(\log n+\log\tau)$ such that $w(I_j)\le w(I_{s_2})/N_{I_j}$ for all $j\ge s_3$, and thus 
\begin{align}
|x-z_i|>8n^2 N_{I_j} w(I_j)\text{ for all }i=v+1,\ldots,n,\text{ }j\ge s_3,\text{ and all }x\in I_j:\label{outerbound}
\end{align}
Due to Lemma~\ref{lemma:bounds1}, we have $N_{I_j}\le N_{\max}:=\lceil 2^{2(\tau+2)}/\sigma_f\rceil=2^{\tilde{O}(n\tau)}$ for all $j$. Thus, if the sequence $I_{s_2},$ $I_{s_2+1},\ldots$ starts with more than $m_{\max}:=\log\log N_{\max}+1=O(\log n+\log\tau)$ consecutive linear subdivision steps, then $N_{I_{j'}}=4$ and $w(I_{j'})\le w(I_{s_2})/4=w(I_{s_1})/N_{I_{j'}}$ for some $j'\le s_2+\log\log m_{\max}$. Otherwise, there exists a $j'$ with $s_2\le j'\le s_2+m_{\max}$ such that the step from $I_{j'}$ to $I_{j'+1}$ is quadratic. Since the length of a sequence of consecutive quadratic subdivision steps is also bounded by $m_{\max}$, there must exist a $j''$ with $j'+1\le j''\le j'+m_{\max}+1$ such that the step from $I_{j''-1}$ to $I_{j''}$ is quadratic, whereas the step from $I_{j''}$ to $I_{j''+1}$ is linear. Then, $N_{I_{j''+1}}=\sqrt{N_{I_{j''}}}=N_{I_{j''-1}}$ and $$w(I_{j''+1})=w(I_{j''})/2=w(I_{j''-1})/(2N_{I_{j''-1}})<w(I_{s_2})/N_{I_{j''+1}}.$$ Hence, in both cases, we have shown that there exists an $s_3\le s_2+2m_{\max}+1=O(\log n+\log\tau)$ with $w(I_{s_3})\le w(I_{s_2})/N_{I_{s_3}}$. Then, by induction, it follows that $w(I_{j})\le w(I_{s_2})/N_{I_j}$ for all $j\ge s_3$ which shows (\ref{outerbound}).

We are now ready to show that the subdivision step from $I_j$ to $I_{j+1}$ is quadratic if $j\ge s_3$ and $w(I_j)\ge 68nN_{I_j} w(I_s)$: Namely, if the latter two inequalities hold, then one of the endpoints (w.l.o.g., we assume this point to be $a_j$) of $I_j$ has distance at least $w(I_j)/2\ge 34nN_{I_j}w(I_s)$ from $a_s$. Thus, the distance from $a_j$ to any of the roots $z_1,\ldots,z_v$ is larger than $34nN_{I_j}w(I_s)-2nw(I_s)\ge 32nN_{I_j}w(I_s)$. In addition, we have $|a_j-z_i|>8n^2 N_{I_j} w(I_{j})$ for all $i>v+1$ due to (\ref{outerbound}). Thus,\begin{align} \nonumber
\left|\frac{1}{v}\cdot\frac{(a_j-a_s)f'(a_j)}{f(a_j)}\right|&=\left|\frac{1}{v}\sum_{i=1}^v\frac{a_j-a_s}{a_j-z_i}+\frac{1}{v}\sum_{i=v+1}^n \frac{a_j-a_s}{a_j-z_i}\right|=\left|1+\frac{1}{v}\sum_{i=1}^v\frac{z_i-a_s}{a_j-z_i}+\frac{1}{v}\sum_{i=v+1}^n \frac{a_j-a_s}{a_j-z_i}\right|\\ \nonumber 
&\le 1+\frac{1}{v}\sum_{i=1}^v\frac{|z_i-a_s|}{|a_j-z_i|}+\frac{1}{v}\sum_{i=v+1}^n \frac{|a_j-a_s|}{|a_j-z_i|}<1+\frac{2nw(I_s)}{32nN_{I_j}w(I_s)}+n\cdot\frac{w(I_j)}{8n^2N_{I_j}w(I_j)}\\
\nonumber&\le 1+\frac{1}{16N_{I_j}}+\frac{1}{8nN_{I_j}}\le 1+\frac{1}{8N_{I_j}},
\end{align}
where we used that $f'(a)/f(a)=\sum_{i=1}^n (a-z_i)^{-1}$ for all $a\in\C$ with $f(a)\neq 0$.
In completely analogous manner, we show that
\begin{align*} 
\left|\frac{1}{v}\cdot\frac{(a_j-a_s)f'(a_j)}{f(a_j)}\right|>1-\frac{1}{8N_j}.
\end{align*}
This yields the existence of an $\epsilon\in\R$ with $|\epsilon|<1/(8N_j)\le 1/32$ and $\frac{1}{v}\cdot\frac{(a_j-a_s)f'(a_j)}{f(a_j)}=1+\epsilon$. We can now derive the following bound on the distance between the approximation obtained by the Newton iteration and $a_s$:
\begin{align}\nonumber
\left|a_s-(a_j-v\cdot\frac{f(a_j)}{f'(a_j)})\right|&=|a_s-a_j|\cdot \left|1-\frac{1}{\frac{1}{v}\cdot\frac{(a_j-a_s)f'(a_j)}{f(a_j)}}\right|=|a_s-a_j|\cdot \left|1-\frac{1}{1+\epsilon}\right|\\
&=\epsilon\cdot|a_s-a_j|\cdot\left|\frac{1}{1+\epsilon}\right|<\frac{4|a_s-a_j|}{33N_{I_j}}\le\frac{w(I_j)}{7N_{I_j}}\label{shrink}
\end{align}
If $a_s\ge b_j-w(I_j)/N_{I_j}$, then $(a_s,b_s)\subset B_2=(b_j-w(I_j)/N_{I_j},b_j)$ (cf. Step 3 (a), (\ref{def:B12}) in our algorithm for the definition of $B_2$), and thus $\var(f,B_2)=v$. Hence, in this case, we keep $I_{j+1}=B_2$ which has size $w(I_j)/N_{I_j}$. If $a_s<b_j-w(I_j)/N_{I_j}$, then according to (\ref{shrink}) we have $$a_j-v\cdot\frac{f(a_j)}{f'(a_j)}\in (a_s-\frac{w(I_j)}{7N_{I_j}},a_s+\frac{w(I_j)}{7N_{I_j}})\subset (a_j+\frac{w(I_j)}{4N_{I_j}},b_j-\frac{w(I_j)}{4N_{I_j}}).$$ 
It follows that the interval $I_1'$ as defined in Step 3 (b), (\ref{def:Ii}) of our algorithm contains $(a_s,b_s)$, and thus $\var(f,I_1')=v$. This shows that the subdivision step from $I_j$ to $I_{j+1}$ is quadratic.

We now consider the sequence $(w(I_{s_3+i}),n_{I_{s_3+i}})$, for $i=1,\ldots,i'$, where $i'$ is defined as the largest index with $w(I_{s_3+i'})\ge 68nw(I_s)$. Then, our above argument implies that the sequence $(w(I_{s_3+i}),n_{I_{s_3+i}})_{1\le i\le i'-1}$ coincides with the sequence $(x_i,n_i)_{1\le i\le i'-1}$ as defined in Lemma~\ref{lemma:sequence}, where $n_1=m=n_{I_{s_3+1}}$ and $w':=68nw(I_s)$. Namely, if $w(I_{s_3+i})/N_{I_{s_3+i}}\ge w'$, then $68nw(I_s)\le w(I_{s_3+i+1})=w(I_{s_3+i})/N_{s_3+i}$ and $n_{I_{s_3+i+1}}=1+n_{I_{s_3+i}}$, whereas, for $w(I_{s_3+i})/N_{I_{s_3+i}}< w'$, we have $w(I_{s_3+i+1})=w(I_{s_3+i})/2$ and 
$n_{I_{s_3+i+1}}=\max(n_{I_{s_3+i}}-1,1)$. It follows that $i'$ is bounded by $8(n_1+\log\log\max(4,w(I_{s_3+1})/w'))=O(\log n+\log \tau)$. Hence, there exists an $s_4=s_3+i'+1=O(\log n+\log \tau)$ with $w(I_j)<68nw(I_s)$ for all $j\ge s_4$. Finally, this shows that $s$ is upper bounded by $s_4+\log(68n)=O(\log n+\log \tau)$.
\end{proof}

Combining the latter theorem and (\ref{treesize1}) immediately yields the following result on the size of the induced recursion tree:

\begin{theorem}
For a polynomial $f$ of degree $n$ with integer coefficients of bitsize $\tau$, the algorithm $\dcn$ induces a recursion tree $T_{\dcn}$ of size
\[
|T_{\dcn}|=\var(f,I_0)\cdot O(\log n +\log\tau)=O(n\cdot(\log n+\log\tau)),
\]
where $I_0:=(-2^{\tau+1},2^{\tau+1})$ denotes the initial interval known to contain all real roots of $f$.
\end{theorem}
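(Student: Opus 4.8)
The plan is to simply combine the structural decomposition of the recursion tree from Section~\ref{sec:idea} with the per-chain bound established in Theorem~\ref{thm:main1}. First I would recall the bookkeeping already carried out around equation~(\ref{treesize1}): the pruned tree $T_{\dcn}^*$ (obtained by deleting all terminal nodes) splits into the $n'$ special nodes $J_1,\ldots,J_{n'}$ together with the chains $T_2,\ldots,T_{n'}$, where each $T_i$ connects $J_i$ to the minimal-width special node strictly containing it. Since every node has at most $l_0$ children (here $l_0$ is bounded by a constant, as each subdivision step produces at most two children), we have $|T_{\dcn}| = O(|T_{\dcn}^*|)$, and hence
\[
|T_{\dcn}| = O\Bigl(n' + \sum_{i=2}^{n'} |T_i|\Bigr).
\]

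Next I would invoke the counting argument for special nodes, already given in Section~\ref{sec:idea} via the potential $\mu = \sum_I \var(f,I) - \#\{I : \var(f,I) > 0\}$, which drops by at least one at every non-root special node and starts at $\var(f,\mathcal{I}_0) - 1 \le n-1$. This yields $n' \le \var(f,\mathcal{I}_0) \le n$. Then, applying Theorem~\ref{thm:main1} to each chain gives $|T_i| = O(\log n + \log\tau)$ uniformly in $i$, so that
\[
\sum_{i=2}^{n'} |T_i| \le (n'-1)\cdot O(\log n + \log\tau) = O\bigl(n(\log n + \log\tau)\bigr).
\]
Substituting both estimates into the displayed bound for $|T_{\dcn}|$ and absorbing the $O(n')=O(n)$ term into the (larger) sum gives exactly $|T_{\dcn}| = \var(f,I_0)\cdot O(\log n + \log\tau) = O(n(\log n + \log\tau))$.

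There is essentially no obstacle left once Theorem~\ref{thm:main1} is in hand: the entire content of this statement lies in the chain bound, and the present theorem is a routine aggregation over the at most $n$ chains. The only points warranting a sentence of care are (i) confirming that the constant $l_0$ really is a global constant for $\dcn$ (immediate from the algorithm, since Steps 3(a)--(c) each produce one or two intervals), so that the passage from $T_{\dcn}^*$ to $T_{\dcn}$ costs only a constant factor, and (ii) noting that the $O(\log n + \log\tau)$ bound of Theorem~\ref{thm:main1} is uniform over all special nodes $J_i$ — it depends on $n$ and $\tau$ only through the global width and $N_I$ bounds of Lemma~\ref{lemma:bounds1}, not on the particular chain — which justifies pulling the factor out of the sum.
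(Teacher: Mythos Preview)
Your proposal is correct and follows exactly the paper's own approach: the paper states this theorem with the one-line justification that it follows immediately by combining Theorem~\ref{thm:main1} with the decomposition~(\ref{treesize1}), and your write-up spells out precisely that aggregation, together with the bound $n'\le\var(f,\mathcal{I}_0)$ already established in Section~\ref{sec:idea}. Your additional remarks on the constancy of $l_0$ and the uniformity of the $O(\log n+\log\tau)$ bound are accurate and do not deviate from the intended argument.
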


\subsection{Bit Complexity Analysis}\label{sec:bitcomplexity}

We now derive an upper bound for the number of bit operations that are needed to determine isolating intervals for the real roots of $f$. We will show that, in each iteration, the costs are dominated (up to a constant factor) by the computation of  the polynomial $f_I=(x+1)^n\cdot f((ax+b)/(x+1))$ as defined in (\ref{poly:fI}). The costs for this step mainly depend on the bitsize of the endpoints, and thus on the length of the interval $I=(a,b)$. The following Lemma provides a lower bound on the width of the special nodes $J_i$, and thus also for the nodes $I\in T_i$, in terms of the separations of the roots $z_1,\ldots,z_n$:

\begin{lemma}\label{boundIj}
For a polynomial $f$ as defined in (\ref{polyf}), we can order the roots of $z_1,\ldots,z_n$ of $f$ in way such that
\[
w(J_i)>\frac{\sigma(z_i)}{4}\cdot n^{-5-2\log n},
\]
where $i=1,\ldots,n'$ and $J_1,\ldots,J_{n'}$ denote the special nodes in the recursion tree $T_{\dcn}$. In addition, the endpoints of an arbitrary interval $I\in T(i)\cup \{J_i\}$ are dyadic numbers which can be represented by $O(\tau+\log\sigma_i^{-1}+\log^2n)$ bits.
\end{lemma}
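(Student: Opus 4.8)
The plan is to build the ordering of the roots greedily, processing the special nodes $J_1,\dots,J_{n'}$ in order of non‑increasing width (which is how they are indexed), and at each step attaching to $J_i$ a root $z_i$ that is ``responsible'' for $J_i$ being special — intuitively, a root lying in the Obreshkoff lens of $J_i$ but in a region that has not yet been charged to an earlier, wider special node. Concretely, I would argue as follows. Fix a special node $J_i$ with $v:=\var(f,J_i)\ge 2$. Since $J_i$ is special, each of its children counts strictly fewer sign variations, so by Theorem~\ref{Obreshkoff} (property (P3)) the Obreshkoff area $A_n(J_i)$ contains at least $v\ge 2$ roots, and in fact the ``drop'' in sign variations forces at least two of these roots to be genuinely close to $J_i$ on the scale $w(J_i)$: if $A_n(J_i)$ contained only one root, the two‑circle argument used for \textsc{Dsc} would give $\var(f,J_i)=1$. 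So $A_n(J_i)$ contains two roots at mutual distance $O(n\,w(J_i))$, hence $w(J_i)\ge c\cdot\sigma(z)/n$ for any root $z$ among those two whose separation is realized by the other one. This already gives a bound of the claimed shape for a single node; the work is to make the assignment injective.

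Next I would handle the injectivity via a counting/scaling argument over all special nodes simultaneously. The key geometric input is Lemma~\ref{lem:lensarea}(2): two special nodes $J_i,J_k$ whose distance exceeds $4n^2\min(w(J_i),w(J_k))$ have disjoint Obreshkoff areas, so the root‑clusters they are responsible for are disjoint and the assignment can be made injective on such a family for free. The remaining difficulty is a chain of special nodes that are geometrically nested or very close together. Here I would use that $T_{\dcn}^*$ partitions into the special nodes plus the chains $T_i$, and that a chain $T_i$ connects $J_i$ to the minimal‑width special node $J_k$ containing it; by Theorem~\ref{thm:main1} each such chain has length $O(\log n+\log\tau)$, but more importantly along the chain the width shrinks, and between two consecutive special nodes the interval has undergone at least one genuine ``both endpoints move'' step, so $w(J_i)$ cannot be much smaller than $w(J_k)$ divided by a factor that telescopes to at most $n^{O(\log n)}$ over the whole ancestor line (there are at most $n'\le n$ special ancestors, and each contributes a polynomial‑in‑$n$ factor coming from the $8n^2$ in Lemma~\ref{lem:lensarea}). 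Tracking these factors is exactly where the $n^{-5-2\log n}$ in the statement comes from: the $n^{-5}$ absorbs the per‑node constants ($8n^2$, the $4n$ from the angle $\pi/(2(n+2))$, the radius bound $n\cdot w(I')$, etc.), and the $n^{-2\log n}=n^{-\log n^2}$ is the telescoped product of these polynomial factors over the $O(\log n)$ essential levels identified in Theorem~\ref{thm:main1}.

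Having fixed the ordering and the bound $w(J_i)>\tfrac{\sigma(z_i)}{4}n^{-5-2\log n}$, the bitsize statement follows by a short computation: for $I\in T(i)\cup\{J_i\}$ we have $w(I)\ge w(J_i)$ by Lemma~\ref{lemma:bounds1}, and all endpoints are dyadic of the form $a+k\cdot w(I)/(4N_I)$ produced from the initial endpoints $\pm 2^{\tau+1}$ by bisections and by the quadratic steps, so their denominators are powers of two of exponent $O(\log w(I)^{-1})=O(\log w(J_i)^{-1})$. Plugging in the lower bound on $w(J_i)$ and $\log\sigma(z_i)^{-1}\le\log\sigma_i^{-1}$ (using the notation $\sigma_i:=\sigma(z_i)$), together with $\log n^{5+2\log n}=O(\log^2 n)$, gives that the numerators and denominators fit in $O(\tau+\log\sigma_i^{-1}+\log^2 n)$ bits.

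The main obstacle is the second step: making the root‑to‑special‑node assignment injective while keeping the loss only polynomial in $n$ (times the $n^{O(\log n)}$ factor). The clean way to do it is to induct down the tree $T_{\dcn}^*$, maintaining at each special node $J_i$ a set $S_i$ of roots ``available'' to it — those inside $A_n(J_i)$ but not inside $A_n(J_k)$ for any strictly wider special node $J_k\supsetneq J_i$ already processed — and to show $S_i\neq\emptyset$ with some $z\in S_i$ satisfying $\sigma(z)=O(n^{5+2\log n}w(J_i))$. Non‑emptiness uses (P3) applied to $J_i$ together with Lemma~\ref{lem:lensarea}(1) to see that the roots counted by the wider ancestors sit in a much smaller lens, so at least two roots counted by $J_i$ genuinely survive; the separation bound on such a surviving root then comes from comparing it against another root in the same (small) region, exactly as in the single‑node argument. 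Everything else is bookkeeping of constants.
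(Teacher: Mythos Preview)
Your approach is genuinely different from the paper's, and as written it has real gaps.

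First, the injectivity scheme does not work as stated. You define $S_i$ to be the roots in $A_n(J_i)$ that are not in $A_n(J_k)$ for any wider special ancestor $J_k\supsetneq J_i$. But for a special ancestor $J_k$ that is much wider than $J_i$, Lemma~\ref{lem:lensarea}(1) gives exactly the containment $A_n(J_i)\subset L_n(J_k)\subset A_n(J_k)$, so $S_i=\emptyset$. Your attempted fix (``the roots counted by the wider ancestors sit in a much smaller lens'') inverts the geometry: wider ancestors have \emph{bigger} lenses and areas, not smaller. You also never handle the case of two special nodes that are \emph{not} in an ancestor relation but whose Obreshkoff areas overlap---these can share roots, and nothing in your assignment prevents double-counting there.

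Second, your derivation of the $n^{-2\log n}$ factor is incorrect. You attribute it to telescoping polynomial-in-$n$ losses over the ``$O(\log n)$ essential levels identified in Theorem~\ref{thm:main1}'', but that theorem gives $O(\log n+\log\tau)$ levels, which would introduce a $\tau$-dependent factor that is absent from the statement. In the same sentence you also say ``there are at most $n'\le n$ special ancestors, and each contributes a polynomial-in-$n$ factor'', which would give $n^{O(n)}$, not $n^{O(\log n)}$. And the implicit claim that $w(J_i)/w(J_k)$ is only polynomially small for consecutive special nodes is false: along a chain $T_i$ the width can collapse by an arbitrary (doubly exponential) amount via quadratic steps.

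The paper proceeds quite differently. It does \emph{not} build an explicit injective assignment. Instead, it orders the roots once and for all by separation, $\sigma(z_1)\ge\cdots\ge\sigma(z_n)$, and then, for each threshold $\sigma$, proves a \emph{counting} statement: the number of special nodes of width below $\frac{\sigma}{4}n^{-5-2\log n}$ is at most the number $k$ of roots with separation below $\sigma$. The count is obtained by iteratively \emph{merging} those small special nodes whose Obreshkoff areas overlap until all areas are disjoint; an induction on the number $s$ of merged intervals (using the $4n^2$ distance bound of Lemma~\ref{lem:lensarea}(2)) shows the merged width stays below $w_{\min}\cdot s^4 n^{2\log s}\le w_{\min}\cdot n^{4+2\log n}$, so that each merged Obreshkoff area still only sees roots of separation below $\sigma$. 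Disjointness then turns the local sign-variation counts into a global bound by $k$, and the standard ``$\mu$ decreases at every special node'' argument converts this into a bound on the number of special nodes. The ordering of roots by separation, combined with the ordering of the $J_i$ by width, then gives the inequality directly. The $n^{-2\log n}$ thus comes from the merging recursion, not from the recursion tree.

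Your final paragraph on bitsizes is fine once the width bound is in place.
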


\begin{proof}
We first order the roots of $f$ with respect to their separations, 
that is, $\sigma(z_1)\ge\cdots\ge\sigma(z_n)$. For a fixed 
$\sigma\in\R^+$, let $k$ be defined such that $\sigma(z_{n-k+1})< 
\sigma\le\sigma(z_{n-k})$, that is, exactly the $k$ roots $z_{n-k+1},\ldots,z_n$ have separation less than $\sigma$. We further denote 
$I_1:=J_{i_1},\ldots,I_m:=J_{i_m}$ the special nodes such that
$$w(I_l)<w_{\min}:=\frac{\sigma}{4}\cdot n^{-5-2\log n}\text{ for all }l=1,\ldots,m,$$ 
and each 
special node $J_i\in T_{\dcn}$ which contains $I_l$ has width $w(J_i)\ge w_{\min}$. In addition, we denote $v_l:=\var(f,I_l)\ge 2$ the number of sign variations that 
we count for $I_l$. Since the intervals $I_l$ are disjoint, we have 
$v_1+\cdots+v_m\le n$, and thus $m\le n/2$. According to 
Theorem~\ref{Obreshkoff}, $v_l$ is a lower bound for the number of roots 
within the Obreshkoff area $A_n^{(l)}$ for $I_l$. Furthermore, in 
Lemma~\ref{lem:lensarea}, we have proven that any two points within $A_n^{(l)}$ 
have distance less than $4nw(I_l)<\sigma$, and thus each root contained 
in $A_n^{(l)}$ must be one of the $k$ roots $z_{n-k+1},\ldots,z_n$. Let $S_l$ be the set of all roots which are contained in $A_n^{(l)}$. 

Let us first consider the case, where the Obreshkoff areas $A_n^{(l)}$ 
are pairwise disjoint. Then, the subsets $S_l\subset A_n^{(l)}$ are also pairwise disjoint, and thus 
$$v_1+\cdots+v_s\le |S_1|+\cdots+|S_m|\le k.$$
We now turn to the more general situation, where some of the $A_n^{(l)}$ may overlap. If this happens, 
then it is possible that, in total, some of the roots contained in these areas are counted more than once. Hence, we propose the following approach: In a first step (see the subsequent construction), starting with the intervals $I_1,\ldots,I_m$, we iteratively merge intervals whose corresponding Obreshkoff areas overlap until we finally obtain intervals $I_1',\ldots,I'_{m'}$ such that
\begin{itemize} 
\item the intervals $I_1,\ldots,I_m$ are covered by $I_1',\ldots,I'_{m'}$, 
\item each $I_l'$ has width $w(I_l')<w_{\min}\cdot n^{4+2\log n}$, and
\item the Obreshkoff areas $A_n$ for $I_l'$ are pairwise disjoint.
\end{itemize}
Then, using the same argument as above yields $v_1+\cdots+v_{m}\le 
v':=\sum_{l=1}^{m'}\var(f,I'_{l})\le k$, where the latter inequality follows from the fact that the Obreshkoff area $A_n$ for each $I_{l}'$ contains only roots 
with separation less than $4nw(I'_{l})<4nw_{\min}\cdot n^{4+2\log n}<\sigma(z_{n-k})$.
It remains to show how to construct the intervals $I_l'$: We start with a list 
$\mathcal{A}$ of active intervals, where we initially set 
$\mathcal{A}:=\{I_1,\ldots,I_m\}$. In each iteration, we pick two intervals 
$I=(a,b)$ and $I'=(a',b')$ from $\mathcal{A}$ whose corresponding Obreshkoff 
areas overlap. Then, we remove $I$, $I'$, and all intervals $J\in\mathcal{A}$ 
in between $I$ and $I'$. Finally, we add the smallest interval $K$ into 
$\mathcal{A}$ which contains $I$ and $J$ (i.e., $K=(\min(a,a'),\max(b,b'))$).
We proceed in this way until we obtain intervals $I_1',\ldots,I'_{m'}$ such that the corresponding Obreshkoff areas do not overlap. It remains to show that the length of each of the so-obtained intervals is bounded by $w_{\min}\cdot n^{4+2\log n}$: At any stage, an interval $J$ obtained in the above iteration, covers a certain number $s$ of intervals from $I_1,\ldots,I_m$. By induction on $s$, we prove that 
\begin{align}
w(J)<w_{\min}\cdot s^4n^{2\log s},\text{ and thus }w(J)\le w_{\min}\cdot n^{2\log n+4}.\label{widthJ}
\end{align}
If $s=1$, then trivially $J=I_l$ for some $l$ which proves the claim for $s=1$. An interval $J$ covering $s+1$ intervals from $I_1,\ldots,I_m$ is obtained by merging two intervals $I$ and $I'$ which cover $s_1$ and $s_2$ intervals, respectively, where $s_1+s_2\le s+1$ and, w.l.o.g., $1\le s_1\le s_2$. From Lemma~\ref{lem:lensarea}, it follows that the distance between $I$ and $I'$ is bounded by $4n^2\cdot\min(w(I'),w(I))$ because the corresponding Obreshkoff areas overlap. Hence, $J$ has width  
\begin{align*}
w(J)&\le w(I)+w(I')+4n^2\cdot\min(w(I'),w(I))<w_{\min}\cdot(s_2^4n^{2\log s_2}+(4n^2+1)s_1^4n^{2\log s_1})\\
&<w_{\min}\cdot(s_2^4n^{2\log s_2}+8s_1^4n^{2\log s_1+2})=w_{\min}\cdot(s_2^4n^{2\log s_2}+8s_1^4n^{2\log 2s_1})
\end{align*}
If $2s_1\ge s_2$, then $$s_2^4n^{2\log s_2}+8s_1^4n^{2\log 2s_1}\le (8s_1^4+s_2^4)n^{2\log 2s_1}\le (s_1+s_2)^4n^{2\log(s_1+s_2)}.$$ Otherwise, we have $$s_2^4n^{2\log s_2}+8s_1^4n^{2\log 2s_1}\le (8s_1^4+s_2^4)n^{2\log s_2}< (s_1+s_2)^4n^{2\log(s_1+s_2)},$$
and thus (\ref{widthJ}) follows. 

Hence, we count at most $k$ sign variations for the intervals $I_l$, in total. Now, the same argument as used in Section~\ref{sec:idea} to show that there exists at most $n'=\var(f,[-2^{\tau+1},2^{\tau+1}])$ special nodes also shows that the number of all special nodes $J$ with $w(J)\le\sigma$ is bounded by $k$. Namely, we start with special nodes $I_1,\ldots,I_m$ with $\sum_{l=1}^m\var(f,I_l)\le k$, and whenever a special node is subdivided, the value $\mu:=\sum_I\var(f,I)-\#\{I:\var(f,I)\ge 2\}$ decreases by at least one. For the result on the width of the intervals $I_j$, we consider integers $k_1,\ldots,k_s$, with $k_1<k_2<\cdots<k_s$, such that $$\sigma(z_1)=\cdots=\sigma(z_{n-k_s})>\sigma(z_{n-k_s+1})=\cdots=\sigma(z_{n-k_{s-1}})>\sigma(z_{n-k_1+1})=\cdots=\sigma(z_n).$$
There exist at most $k_j$ special nodes $J_i$ with $w(J_i)<\frac{\sigma(z_{n-k_j})}{4}n^{-4-2\log n}$, and there cannot exist a special node $J_i$ with $w(J_i)<\sigma_f=\sigma(z_n)$. In addition, the number of special nodes with $w(J_i)<\sigma(z_{n-k_1})n^{-4-2\log n}/4$ is bounded by $k_1$. Then, by induction, it follows that $w(J_i)<\sigma(z_{i})n^{-4-2\log n}/4$ because the $J_i$ are ordered with respect to their length.

For the bit complexity of the endpoints of an interval $I=(a,b)\in T(i)\cap \{J_i\}$, we remark that, due to our construction, $a$ and $b$ are both dyadic numbers with modulus bounded by $2^{\tau+1}$. In addition, the denominator of $a$ and $b$ in its dyadic representation is bounded by $\max(1,\log w(I)^{-1})$. Hence, we can represent both endpoints of $I$ with $O(\tau+\log\sigma_i^{-1}+\log^2n)$ many bits.
\end{proof}

We will now derive our final result on the bit complexity of $\dcn$: In each step of the algorithm, we have to compute the polynomial $f_I(x)=(x+1)^n\cdot f((ax+b)/(x+1))$, where $I=(a,b)$ is the interval that is actually processed. The latter computation decomposes into computing $f_I^*:=f(a+(b-a)x)$, reversing the coefficients, and then applying a Taylor shift by $1$ (i.e., $x\mapsto x+1$). For  the computation of $f_I^*$, we first shift $f$ by $a$, and then scale by a factor $b-a=w(I)$ which is a power of two. Using asymptotically fast Taylor shift~\cite{Ger04,GG97}, the shift $x\mapsto x+a$ (i.e., the computation of $f(x+a)$) demands for $\tilde{O}(n^2(\tau+\log w(I)^{-1}))$ bit operations. The scaling $x\mapsto (b-a)\cdot x$ is achieved by just shifting the $i$-th coefficient of $f(x+a)$ by $i\cdot\log(b-a)^{-1}=i\cdot\log w(I)^{-1}$ many bits. Then, the resulting polynomial $f_I^*$ has coefficients of modulus $2^{O(n\tau)}$, and the denominators of their dyadic representations are bounded by $2^{O(n(\tau+\log w(I)^{-1}))}$. Hence, reversing the coefficients of $f^*_I$, and then applying a Taylor shift by $1$, demands for $\tilde{O}(n^2(\tau+\log w(I)^{-1}))$ bit operations. In summary, the costs for computing $f_I$ are bounded by $\tilde{O}(n^2(\tau+\log w(I)^{-1}))$. The same bound further applies to the computation of $\lambda_1$ and $\lambda_2$ in Step 3 (b), (\ref{def:lambda}) of our algorithm because, in this step, we have to evaluate a polynomial of degree $n$ and bitsize $\tau$ at a $(\tau+\log w(I)^{-1})$-bit number. If $I$ is non-terminal, then we also have to compute the number of sign variations for the intervals  $B_1$, $B_2$, $I_1'$ and $I_2'$. The same argument as above also shows that we can do so using $\tilde{O}(n^2(\tau+\log N_I+\log w(I)^{-1}))$ bit operations. 

For an interval $I\in T(i)\cup J_i$, we have $$\log N_I+\log w(I)^{-1}=O(\log N_I +\log w(I_j)^{-1})=O(\tau+\log\sigma(z_i)^{-1}+\log^2 n));$$
see Lemma~\ref{lemma:bounds1} and~\ref{boundIj}. Thus, the total costs for all computations at $I$ are bounded by $\tilde{O}(n^2(\tau+\sigma(z_i)^{-1}))$ bit operations. It remains to consider a terminal interval $I$ which is one of the two children of a special node $J_i$. In this case, we only have to bound the cost for the computation of $f_I$ because $\var(f,I)\le 1$. 
Since $w(I)=w(J_i)/2$, the latter computation needs $\tilde{O}(n^2(\tau+\log w(I_j)^{-1}))=\tilde{O}(n^2(\tau+\log\sigma(z_i)^{-1}))$ bit operations as well. 

In Theorem~\ref{thm:main1}, we have shown that $|T_i|=O(\log n+\log\tau)$ for all $i$, and thus the total costs for isolating the real roots of $f$ are bounded by
\[
O(\log n+\log\tau)\cdot\sum_{i=1}^{n'}\tilde{O}(n^2(\tau+\log\sigma(z_i)^{-1}))=\tilde{O}(n^3\tau)
\]
since $\sum_{i=1}^{n'}\log \sigma(z_i)^{-1}=O(n\tau+\sum_{i=1}^{n}\log \sigma(z_i)^{-1})=\tilde{O}(n\tau)$ according to Lemma 19 in~\cite{sagraloff-complexity}. We fix this result:

\begin{theorem}
For a polynomial $f$ of degree $n$ with integer coefficients of modulus less than $2^\tau$, $\dcn$ isolates the real roots of $f$ using no more than $\tilde{O}(n^3\tau)$ bit operations.
\end{theorem}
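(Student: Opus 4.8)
The plan is to assemble the bit‐complexity bound from the structural results already in place, namely the decomposition of $T_{\dcn}$ into special nodes $J_1,\dots,J_{n'}$ and chains $T_i$ from \eqref{treesize1}, the size bound $|T_i|=O(\log n+\log\tau)$ from Theorem~\ref{thm:main1}, and the width/bitsize bounds from Lemma~\ref{lemma:bounds1} and Lemma~\ref{boundIj}. The skeleton is: (i) bound the cost of processing a single node $I$; (ii) express that cost in terms of $\tau$, $\log w(I)^{-1}$, and $\log N_I$; (iii) replace $\log w(I)^{-1}$ and $\log N_I$ by $O(\tau+\log\sigma(z_i)^{-1}+\log^2 n)$ when $I\in T_i\cup\{J_i\}$; (iv) sum over the $O(\log n+\log\tau)$ nodes in each $T_i$ and over the $n'\le n$ chains; and (v) invoke the root-separation bound $\sum_{i=1}^n\log\sigma(z_i)^{-1}=\tilde O(n\tau)$ from Lemma~19 of~\cite{sagraloff-complexity} to collapse everything to $\tilde O(n^3\tau)$.

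For step (i)–(ii), the key observation is that every operation performed at a node $I=(a,b)$ reduces to computing $f_I(x)=(x+1)^n f((ax+b)/(x+1))$, possibly on the auxiliary subintervals $B_1,B_2,I_1',I_2'$, plus one evaluation of $f$ and $f'$ at $a$ and $b$ for the Newton points $\lambda_1,\lambda_2$. The computation of $f_I$ factors as a Taylor shift $x\mapsto x+a$, a scaling $x\mapsto w(I)x$ (which is just a bitshift of coefficients since $w(I)$ is a power of two), coefficient reversal, and a Taylor shift by $1$. Using asymptotically fast Taylor shift~\cite{Ger04,GG97}, each of these costs $\tilde O(n^2(\tau+\log w(I)^{-1}))$ bit operations; the auxiliary subintervals $B_i,I_i'$ have endpoints of bitsize $O(\tau+\log N_I+\log w(I)^{-1})$ (they live on the grid of mesh $w(I)/4N_I$), so their $\var$-computations cost $\tilde O(n^2(\tau+\log N_I+\log w(I)^{-1}))$; and evaluating $f,f'$ at the $(\tau+\log w(I)^{-1})$-bit numbers $a,b$ is within the same bound. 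Thus the per-node cost is $\tilde O(n^2(\tau+\log N_I+\log w(I)^{-1}))$, and for a terminal child of a special node only the cheaper $f_I$ computation is needed since $\var(f,I)\le 1$ stops further processing.

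For step (iii), Lemma~\ref{lemma:bounds1} gives $\log N_I=O(\tau+\log w(J_i)^{-1})$ and $w(I)\ge w(J_i)$ for $I\in T_i$, while Lemma~\ref{boundIj} gives $\log w(J_i)^{-1}=O(\log\sigma(z_i)^{-1}+\log^2 n)$ for a suitable ordering of the roots; together these yield $\log N_I+\log w(I)^{-1}=O(\tau+\log\sigma(z_i)^{-1}+\log^2 n)=\tilde O(\tau+\log\sigma(z_i)^{-1})$, so the per-node cost in chain $T_i$ is $\tilde O(n^2(\tau+\log\sigma(z_i)^{-1}))$. Steps (iv)–(v): multiplying by $|T_i|=O(\log n+\log\tau)$ and summing over $i=1,\dots,n'$ gives
\[
O(\log n+\log\tau)\cdot\sum_{i=1}^{n'}\tilde O\bigl(n^2(\tau+\log\sigma(z_i)^{-1})\bigr)
=\tilde O\Bigl(n^3\tau+n^2\sum_{i=1}^{n}\log\sigma(z_i)^{-1}\Bigr)=\tilde O(n^3\tau),
\]
using $n'\le n$ and $\sum_{i}\log\sigma(z_i)^{-1}=\tilde O(n\tau)$. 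The main obstacle — already discharged by Lemma~\ref{boundIj} — is the step that trades the worst-case width $w(J_i)^{-1}$, which could a priori be as large as $2^{\tilde O(n\tau)}$ for every single $i$ and would only give $\tilde O(n^4\tau)$, for the root-separation $\sigma(z_i)^{-1}$ of an individually assigned root, so that the amortized sum telescopes via the global separation bound; verifying the $\var$-costs on the auxiliary grid intervals $B_i,I_i'$ stay within the same asymptotic class is the only other point requiring care, but it follows from the same Taylor-shift analysis since those endpoints have bitsize only $O(\log N_I)$ larger.
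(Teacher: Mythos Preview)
Your proposal is correct and follows essentially the same route as the paper's own argument in Section~\ref{sec:bitcomplexity}: the same per-node cost analysis via fast Taylor shifts, the same reduction of $\log N_I+\log w(I)^{-1}$ to $O(\tau+\log\sigma(z_i)^{-1}+\log^2 n)$ through Lemmas~\ref{lemma:bounds1} and~\ref{boundIj}, and the same final summation using $|T_i|=O(\log n+\log\tau)$ and $\sum_i\log\sigma(z_i)^{-1}=\tilde O(n\tau)$. The only nodes you do not explicitly account for are the terminal children of non-special nodes in the chains~$T_i$, but these arise only from bisection, have width $\ge w(J_i)/2$, and number at most $|T_i|$, so they fold into the same bound without change.
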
 

\section{Conclusion}

We introduced the first subdivision method to isolate the real roots of a polynomial which achieves the record bound $\tilde{O}(n^3\tau)$ for the bit complexity of this fundamental problem. In comparison to the asymptotically fast algorithms by Pan and Sch\"onhage from the 80tes which compute all complex roots, the new approach is much simpler and can be considered practical to an extremely high degree. The algorithm is based on a novel subdivision technique which combines Descartes' Rule of Signs and Newton iteration. As a consequence, our algorithm shows quadratic convergence towards the roots in most steps. 

So far, the approach only applies to polynomials with integer (or rational) coefficients. In~\cite{s-bitstream-mcs11}, we showed how to modify a subdivision algorithm which uses exact computation in each step such that it also applies to polynomials $f$ with arbitrary real coefficients that can be approximated to any specified error bound (bitstream coefficients). Following this approach, it seems reasonable to express the bit complexity in terms of the geometry of the roots $z_1,\ldots,z_n$ of $f$. For instance, the modified version of the Descartes method from~\cite{sagraloff-complexity} isolates all real roots using $\tilde{O}(n(n\Gamma+\Sigma)^2)$ bit operations, where $\Gamma$ constitutes a bound on the logarithm of the modulus of the roots and $\Sigma:=\sum_{i=1}^n \log\sigma(z_i)^{-1}$. Due to the quadratic convergence achieved by the new algorithm, we expect a corresponding bitstream version to perform the same task by a considerably lower number of bit operations.

%\bibliography{localref,ref,yap,exact,alge}
%\bibliographystyle{abbrv}
\newpage
\section{Appendix}\label{sec:appendix}

\begin{algorithm}
\caption{$\textsc{Dsc}^2$}
\label{alg:dcm}
\begin{algorithmic}
\REQUIRE {polynomial $f =\sum_{0 \le i \le n} a_i x^i\in\Z[x]$ with integer coefficients $a_i$, $|a_i|<2^\tau$ for all $i$.}\medskip
\ENSURE{returns a list $\mathcal{O}$ of disjoint isolating intervals for all real roots of $f$}

\STATE $I_0 \assign (-2^{\tau+1},2^{\tau+1})$; $N_{I_0}\assign 4$
\STATE $\mathcal{A} \assign \sset{(I_0,N_{I_0})}$; $\mathcal{O} \assign \emptyset$ \hfill \COMMENT{\emph{list of
active and isolating intervals}}
%\WHILE {({\bf true})}  
\REPEAT
\STATE $(I,N_I)$ some element in $\mathcal{A}$ with $I=(a,b)$; 
delete $(I,f_I)$ from $\mathcal{A}$
\STATE $v\assign\var(f,I)$
\IF {$v=0$}
\STATE do nothing\hfill\COMMENT{\emph{$I$ contains no root}}
\ELSIF {$v=1$}
\STATE add $I$ to $\mathcal{O}$\hfill\COMMENT{\emph{$I$ isolates a real root}}
%\IF {$v\le 1$}
%     \IF {$v=0$}
%         \STATE do nothing\hfill\COMMENT{$I$ contains no root}
%     \ELSE
%         \STATE add $I$ to $\mathcal{O}$\hfill\COMMENT{$I$ isolates a real root}
%     \ENDIF
\ELSIF {$v>1$}
    \STATE $B_1\assign (a,a+\frac{w(I)}{N_I})$; $B_2\assign (b-\frac{w(I)}{N_I},b)$
    \IF {$\var(f,B_1)=v$ \OR $\var(f,B_2)=v$}
        \STATE for the unique $i\in\{1,2\}$ with $\var(f,B_i)=v$, add $(B_i,N_{B_i})\assign (B_i,N_I^2)$ to $\mathcal{A}$\\ \vspace{0.10cm}
        \hfill\COMMENT{\emph{$B_i$ contains all real roots within $I$} \vspace{0.10cm}}
    \ELSE 
        %\STATE $I^*=(a^*,b^*)\assign \left(a+\frac{w(I)}{3N_I},b-\frac{w(I)}{3N_I}\right)$
        \STATE $\lambda_1\assign a-v\cdot\frac{f(a)}{f'(a)}$; $\lambda_2\assign b-v\cdot\frac{f(b)}{f'(b)}$\\ 
        %\hfill\quad\COMMENT{\emph{If $\xi:=a-v\cdot\frac{f(a)}{f'(a)}\in I^*$, $\lambda_1=\xi$; otherwise, we round to the nearest endpoint of $I^*$}\vspace{0.10cm}}
\STATE for $i=1,2$:\\
$k_i\assign\min(\max(\lfloor  4N_I\cdot\frac{\lambda_i-a}{b-a}\rfloor,2),4N_I-2)$; $I_{i}'\assign (a+(k_i-2)\cdot\frac{w(I)}{4N_I},a+(k_i+2)\cdot\frac{w(I)}{4N_I})$;\\  \vspace{0.10cm} 
\STATE \quad\emph{\{$m_k:=a+k\cdot\frac{w(I)}{4N_I}$ is the $k$-th subdivision point when decomposing $I$ into $4N_I$ equally}
\STATE\quad\emph{sized intervals; $m_{k_i}$ is closest to $\lambda_i$; $I'_i$ has width $\frac{w(I)}{N_I}$ and is centered at $m_{k_i}$.}\}\vspace{0.10cm}
%\STATE for $i=1,2$: $I_i'\assign \left(a+k_i\cdot\frac{w(I)}{4N_I},a+(k_i+1)\cdot\frac{w(I)}{4N_I}\right)$\\  \vspace{0.10cm} \hfill\COMMENT{\emph{$I_i'$ is an interval of size $w(I)/N_I$ centered at $a+k_i\cdot\frac{w(I)}{4N_I}$}\vspace{0.10cm}}
    \IF {$\var(f,I_1')=v$ \OR $\var(f,I_2')=v$}
         \STATE choose the smallest $i\in\{1,2\}$ with $\var(f,I_i')=v$ and add $(I_i',N_{I_i'})\assign(I_i',N_I^2)$ to $\mathcal{A}$\\   \vspace{0.10cm}
         \hfill\COMMENT{\emph{If $\var(f,I_i')=v$, then $I_i'$ contains all roots within $I$}\vspace{0.10cm}}
    \ELSE
         \STATE add $(I_1,N_{I_1}):=((a,m(I)),\max(4,\sqrt{N_I}))$, $(I_2,N_{I_2}):=((m(I),b),\max(4,\sqrt{N_I}))$ to $\mathcal{A}$
         \IF {$f(m(I))=0$}
             \STATE add $[m(I),m(I)]$ to $\mathcal{O}$
         \ENDIF
   \ENDIF
   \ENDIF
\ENDIF
\UNTIL{$\mathcal{A}$ is empty}
\RETURN $\mathcal{O}$
\end{algorithmic}
\end{algorithm}

\end{document}